\newtheorem{theorem}{Theorem}[section]
\newtheorem{corollary}[theorem]{Corollary}
\theoremstyle{definition}
\newtheorem{definition}[theorem]{Definition}
\newtheorem{remark}[theorem]{Remark}
\numberwithin{equation}{section}
\begin{document}
\title[Quantum mechanics in phase space...]{Quantum mechanics in phase
space: The Schr\"odinger and the Moyal representations}
\author[N.C. Dias]{Nuno Costa Dias}
\author[M. de Gosson]{Maurice de Gosson}
\author[F. Luef]{Franz Luef}
\author[J.N. Prata]{Jo\~{a}o Nuno Prata}

\begin{abstract}
We present a phase space formulation of quantum mechanics in the
Schr\"odinger representation and derive the associated Weyl
pseudo-differential calculus. We prove that the resulting theory is
unitarily equivalent to the standard "configuration space" formulation and
show that it allows for a uniform treatment of both pure and mixed quantum
states. In the second part of the paper we determine the unitary
transformation (and its infinitesimal generator) that maps the phase space
Schr\"odinger representation into another (called Moyal) representation,
where the wave function is the cross-Wigner function familiar from
deformation quantization. Some features of this representation are studied,
namely the associated pseudo-differential calculus and the main spectral and
dynamical results. Finally, the relation with deformation quantization is
discussed.
\end{abstract}

\maketitle

\section{Introduction}

A key principle of quantum mechanics states that the fundamental
configuration and momentum operators satisfy the commutation relations of
the Heisenberg algebra
\begin{equation}
\lbrack \hat{x}_{j},\hat{\xi}_{k}]=i\delta _{jk}\quad j,k=1,..,n  \label{I1}
\end{equation}%
all the other commutators being zero. The most standard implementation of
this algebra is given by the Schr\"{o}dinger representation, where $\hat{x}%
_{j}$ and $\hat{\xi}_{j}$ are viewed as self-adjoint operators
\begin{equation*}
\hat{x}_{j}=\mbox{multiplication by}\,\,x_{j}\quad ,\quad \hat{\xi}%
_{j}=-i\partial _{x_{j}}
\end{equation*}%
acting on the Hilbert space $L^{2}(\mathbb{R}^{n})$ of square integrable
functions (with support) on the classical configuration space $\mathbb{R}%
^{n} $. It is well known that the quantization rules
\begin{equation*}
x_{j}\longrightarrow \hat{x}_{j}\quad ,\quad \xi _{j}\longrightarrow \hat{\xi%
}_{j}
\end{equation*}%
do not provide the complete information on how to quantize an arbitrary
classical observable since the formal prescription $a(x,\xi
_{x})\longleftrightarrow \hat{a}=a(\hat{x},\hat{\xi}_{x})$ (where $%
x=(x_{1},...,x_{n})$, $\xi _{x}=(\xi _{1},...,\xi _{n})$) is order
ambiguous. The Weyl pseudo-differential calculus yields the standard (but
not unique) solution for this problem \cite{Weyl,Birk,Wong}. The Weyl
correspondence $a\overset{\text{Weyl}}{\longleftrightarrow }\hat{a}^{W}$ is
a one-to-one linear map that associates with each symbol $a\in S^{\prime }(%
\mathbb{R}^{2n})$ a linear operator $\hat{a}^{W}:\mathcal{S}(\mathbb{R}%
^{n})\rightarrow \mathcal{S}^{\prime }(\mathbb{R}^{n})$, uniquely defined by
\begin{equation*}
\hat{a}^{W}=\left( \frac{1}{2\pi }\right) ^{n}\int_{\mathbb{R}^{n}\times
\mathbb{R}^{n}}F_{\sigma }a(x_{0},\xi _{0})\hat{T}(x_{0},\xi
_{0})\,dx_{0}\,d\xi _{0}
\end{equation*}%
where the integral is a Bochner (operator) integral,
\begin{equation}
F_{\sigma }a(x_{0},\xi _{0})=\left( \frac{1}{2\pi }\right) ^{n}\int_{\mathbb{%
R}^{n}\times \mathbb{R}^{n}}a(x,\xi _{x})e^{i(x_{0}\cdot \xi _{x}-\xi
_{0}\cdot x)}\,dx\,d\xi _{x}  \label{ISF}
\end{equation}%
is the symplectic Fourier transform of $a$ and $\hat{T}$ is the
Heisenberg-Weyl operator, defined for all $\psi \in L^{2}(\mathbb{R}^{n})$
by
\begin{equation*}
T(x_{0},\xi _{0})\psi (x)=e^{i(\xi _{0}\cdot x-\frac{1}{2}\xi _{0}\cdot
x_{0})}\psi (x-x_{0})
\end{equation*}

We now observe that:\newline

(1) The Schr\"odinger representation is by no means the unique possible
implementation of the commutation relations (\ref{I1}). The momentum
representation in $L^2(\mathbb{R}^n)$ is a well known alternative. More
generally, any unitary operator $U: L^2(\mathbb{R}^n) \to L^2(\mathbb{R}^n)$
generates another "quantization rule" through the prescription
\begin{equation*}
x \longrightarrow U \hat x U^{-1} \quad \mbox \quad \xi_x \longrightarrow U
\hat \xi_x U^{-1}
\end{equation*}

(2) Other interesting representations can be obtained by implementing the
observables $x$ and $\xi_x$ as operators acting on Hilbert spaces, other
than $L^2(\mathbb{R}^n)$.\newline

This second possibility was seriously considered in a series of papers \cite%
{TV,GoJPA,CPDE,birkbis,GOLU1,GOLU2,Bracken} focusing on representations in
terms of operators acting on the Hilbert space $L^{2}(\mathbb{R}^{2n})$ of
functions with support on the \textit{phase space} $\mathbb{R}^{2n}$. The
Frederick and Torres-Vega representation \cite{TV,GoJPA},
\begin{equation*}
x\longrightarrow \hat{x}=x+i\frac{\partial }{\partial p}\quad ,\quad \xi
_{x}\longrightarrow \hat{\xi}_{x}=-i\frac{\partial }{\partial x}
\end{equation*}%
leading to the Schr\"{o}dinger equation in the phase space
\begin{equation*}
i\frac{\partial }{\partial t}\Psi (x,p,t)=H(x+i\frac{\partial }{\partial p}%
,-i\frac{\partial }{\partial x})\Psi (x,p,t)
\end{equation*}%
and the "Moyal" representation (given by the "Bopp shifts" \cite{Bopp})
\begin{equation}
x\longrightarrow \tilde{X}=x+\frac{1}{2}i\frac{\partial }{\partial p}\quad
,\quad \xi _{x}\longrightarrow \tilde{\Xi}_{x}=p-\frac{1}{2}i\frac{\partial
}{\partial x}  \label{IMR}
\end{equation}%
are two examples of this sort.

The latter, more symmetric, representation leads to what we shall call the
"Moyal-Weyl pseudo-differential calculus" originally presented in \cite%
{GOLU1} and further studied, in connection with the related "Landau-Weyl"
calculus, in \cite{CPDE,GOLU2}. The representation eq. (\ref{IMR}) is
intimately connected to the deformation formulation of quantum mechanics.
Indeed, for an arbitrary Weyl symbol $a\overset{\text{Weyl}}{%
\longleftrightarrow }\hat{a}^{W}$, we have in the Moyal representation \cite%
{birkbis,GOLU1}
\begin{equation*}
a(\tilde{X},\tilde{\Xi}_{x})=a(x,p)\star
\end{equation*}%
where $\star $ is the Moyal starproduct \cite{BFFLS1,Wong}. Hence, the
stargenvalue equation
\begin{equation}
a(x,p)\star \Psi _{\lambda }(x,p)=\lambda \Psi _{\lambda }(x,p)  \label{Is}
\end{equation}%
can be written in the form
\begin{equation}
a(\tilde{X},\tilde{\Xi}_{x})\Psi _{\lambda }(x,p)=\lambda \Psi _{\lambda
}(x,p)  \label{I2}
\end{equation}%
Moreover, it was also proved in \cite{birkbis,GOLU1} that the solutions of
the previous eigenvalue equation (\ref{I2}) are related to the solutions of
the usual eigenvalue equation
\begin{equation}
a(\hat{x},\hat{\xi}_{x})\psi _{\lambda }(x)=\lambda \psi _{\lambda }(x)
\label{Ie}
\end{equation}%
by the action of intertwining operators $W_{\phi }:L^{2}(\mathbb{R}%
^{n})\rightarrow L^{2}(\mathbb{R}^{2n})$ defined for each $\phi \in \mathcal{%
S}(\mathbb{R}^{n})$ by
\begin{equation*}
\Psi (x,p)=W_{\phi }\psi (x,p)=\left( \frac{1}{2\pi }\right) ^{n/2}\int_{%
\mathbb{R}^{n}}e^{-ip\cdot y}\psi (x+\frac{1}{2}y)\phi ^{\ast }(x-\frac{1}{2}%
y)\,dy
\end{equation*}%
and related to the cross Wigner distribution $W(\psi ,\phi )$ by a simple
normalization factor
\begin{equation*}
W_{\phi }\psi =(2\pi )^{n/2}W(\psi ,\phi ).
\end{equation*}%
Combining the two results (the equality of the two equations (\ref{Is}) and (%
\ref{I2}) and the relation - given by $W_{\phi }$ - of the solutions of eq.(%
\ref{I2}) with those of eq.(\ref{Ie})), de Gosson and Luef \cite{GOLU1}
found a simple proof of the spectral relation between Schr\"{o}dinger
quantum mechanics and the deformation quantization of Bayen \textit{et al.}
\cite{BFFLS1,BFFLS2}.

Some of these results were generalized in \cite{DGLP2} where an extension of
the "phase space Moyal representation" (\ref{IMR}) and the associated
pseudo-differential calculus, allowed for the precise construction of the
quantum theory associated with the extended Heisenberg algebra. The
resulting "noncommutative quantum mechanics" displays an extra
noncommutative structure in both the configurational and momentum sectors
and has played an important role in some recent approaches to quantum
cosmology \cite{babedipr,babedipr1,bracz,cahakola} and quantum gravity \cite%
{Douglas,Szabo}. Furthermore, the relation of the approach of \cite{DGLP2}
with the deformation formulation of noncommutative quantum mechanics \cite%
{dipra1,badipr} was studied in \cite{digoprlu1}.

In this paper we intend to further study the structure and the properties of
the phase space formulation of quantum mechanics, focusing, this time, on
the most central (and simplest) Schr\"odinger representation
\begin{equation*}
x \longrightarrow \hat X=x \quad , \quad \xi_x \longrightarrow \hat \Xi_x=-i%
\frac{\partial}{\partial x}
\end{equation*}
(where the operators $\hat X,\hat\Xi_x$ act on phase space functions $%
\Psi(x,p)\in L^2(\mathbb{R}^{2n})$) and on its relation with the Moyal
representation (\ref{IMR}). In addition, we will also discuss some features
of the phase space formulation of mixed quantum states.

More precisely, we will:

(1) Present the phase space formulation of quantum mechanics in the
Schr\"odinger representation.

(2) Determine and study the associated Weyl pseudo-differential calculus.

(3) Study the relation of the "Schr\"odinger phase space representation"
with the "Schr\"odinger configuration space representation". In particular,
show that there is a family of isometries $T_\chi:L^2(\mathbb{R}^n) \to L^2(%
\mathbb{R}^{2n})$ (indexed by $\chi \in L^2(\mathbb{R}^n):\, ||\chi||=1$)
that intertwine an arbitrary configuration space operator $\hat a$ with the
corresponding phase space operator $\hat A$.

(4) Show that the phase space formulation of quantum mechanics allows for an
uniform treatment of both pure and mixed quantum states.

(5) Prove that the Schr\"odinger phase space representation and the Moyal
phase space representation are unitarily related. Determine the
one-parameter group of unitary transformations (and its infinitesimal
generators) that connects the two representations and use it to prove the
main spectral and dynamical results of the Moyal representation.

(6) Discuss the relation of the Moyal representation with the deformation
quantization of Bayen \textit{et al}.

\subsection{Motivation: Double phase space formulation of classical mechanics%
}

Let us consider a dynamical system living on the phase space $\mathbb{R}^n
\oplus \mathbb{R}^n$ spanned by the canonical variables $(x,\xi_x)$ which
satisfy the usual Poisson bracket structure $\{x,\xi_x\}=I$ (where $I$ is
the $n\times n$ identity matrix) all the others being zero. Let $h(x,\xi_x)$
be the Hamiltonian of the system.

A trivial formulation of this system in the double phase space is obtained
by considering the extension ($(p,\xi_p) \in \mathbb{R}^n \oplus \mathbb{R}%
^n $)
\begin{equation*}
(x, \xi_x) \longrightarrow (x,p,\xi_x,\xi_p): \left\{
\begin{array}{l}
\{x,\xi_x\}=I \\
\{p,\xi_p\}=I%
\end{array}
\right.
\end{equation*}
(all the other commutators being zero) and the new Hamiltonian
\begin{equation*}
H_1(x,p,\xi_x,\xi_p)=h(x,\xi_x)
\end{equation*}
subjected to the initial data constraints
\begin{equation}
p=p_0 \quad , \quad \xi_p=\xi_{p0}  \label{Icc}
\end{equation}
The new set of observables
\begin{equation}
A_1(x,p,\xi_x,\xi_p)=a(x,\xi_x)  \label{IObs}
\end{equation}
yields exactly the same predictions as the original formulation in terms of
the standard phase space observables $a(x,\xi_x)$. Notice that the role of
the constraints here is only to fix the initial values of the non-physical
sector of the theory. The constraints commute with the new Hamiltonian and
are thus preserved through the time evolution.

Another double phase space formulation of classical mechanics is obtained
from the action of the symplectic transformation
\begin{equation}
S:\mathbb{R}^{2n}\oplus \mathbb{R}^{2n}\longrightarrow \mathbb{R}^{2n}\oplus
\mathbb{R}^{2n},\quad \left\{
\begin{array}{l}
x\longrightarrow x-\xi _{p}/2 \\
p\longrightarrow p-\xi _{x}/2 \\
\xi _{x}\longrightarrow \xi _{x}/2+p \\
\xi _{p}\longrightarrow \xi _{p}/2+x%
\end{array}%
\right.  \label{I3}
\end{equation}%
on the previous double phase space formulation. The classical system is then
described in terms of the new observables
\begin{equation*}
A_{2}(x,p,\xi _{x},\xi _{p})=A_{1}(x-\xi _{p}/2,p-\xi _{x}/2,\xi
_{x}/2+p,\xi _{p}/2+x)
\end{equation*}%
and the new double phase space Hamiltonian
\begin{eqnarray*}
H_{2}(x,p,\xi _{x},\xi _{p}) &=&H_{1}(x-\xi _{p}/2,p-\xi _{x}/2,\xi
_{x}/2+p,\xi _{p}/2+x) \\
&=&h(x-\xi _{p}/2,\xi _{x}/2+p)
\end{eqnarray*}%
which yield, of course, the same physical predictions as the original double
phase space formulation in terms of the Hamiltonian $H_{1}$ and the
observables $A_{1}$.

Loosely speaking, this paper is devoted to presenting the quantum
counterparts of the two former double phase space formulations of classical
mechanics, to studying their properties and their relation with the
standard, configuration space formulation of quantum mechanics. In spite of
their apparently simple structure, they yield quantum theories which display
a set of quite interesting properties, namely an uniform description of pure
and mixed states and a remarkable connection with deformation quantization.

Before finishing this section, we note for future reference that the
transformation $S$ can be written as the composition of the two following
symplectic maps
\begin{equation*}
S=S_R(\pi/4)\circ S_D(\ln \sqrt{2})
\end{equation*}
where
\begin{equation*}
S_D(s): \mathbb{R}^{4n} \longrightarrow \mathbb{R}^{4n}
\end{equation*}
\begin{equation*}
S_D(s) (x,p,\xi_x,\xi_p)= (e^s x,e^s p,e^{-s}\xi_x,e^{-s}\xi_p)\quad , \quad
s \in \mathbb{R }
\end{equation*}
and $S_R(\theta)$ is a double rotation in the $(x,\xi_p)$ and the $(p,\xi_x)$
planes
\begin{equation*}
S_R(\theta): \mathbb{R}^{4n} \longrightarrow \mathbb{R}^{4n}
\end{equation*}
\begin{equation*}
S_R(\theta): \left\{
\begin{array}{l}
(x,\xi_p) \longrightarrow (x \cos \theta - \xi_p \sin \theta , x \sin \theta
+\xi_p \cos \theta) \\
(p,\xi_x) \longrightarrow (p \cos \theta - \xi_x \sin \theta , p \sin \theta
+\xi_x \cos \theta)%
\end{array}
\right.
\end{equation*}
The infinitesimal generators of $S_D$ and $S_R$ are the Hamiltonians
\begin{equation}
H_D=x \cdot \xi_x +p \cdot \xi_p  \label{IHD}
\end{equation}
and
\begin{equation}
H_R= -\xi_x \cdot \xi_p -x \cdot p  \label{IHR}
\end{equation}
respectively.

\subsection{Notation}

A generic point of the original phase space $\mathbb{R}^{2n}=\mathbb{R}%
^{n}\oplus \mathbb{R}^{n}$ is denoted by $z_{x}=(x,\xi _{x})$ and that of
the double phase space $\mathbb{R}^{4n}=\mathbb{R}^{2n}\oplus \mathbb{R}%
^{2n} $ by $Z=(z_{x},z_{p})$ where $z_{p}=(p,\xi _{p})$. We will also use $%
z=(x,p)$ and $z_{0}=(x_{0},\xi _{x0})$. The symplectic form on $\mathbb{R}%
^{2n}$ is $\sigma (z_{x},z_{x}^{\prime })=\xi _{x}\cdot x^{\prime }-\xi
_{x}^{\prime }\cdot x$ and on $\mathbb{R}^{2n}\oplus \mathbb{R}^{2n}$ is $%
\sigma _{x}\oplus \sigma _{p}(Z,Z^{\prime })=\sigma (z_{x},z_{x}^{\prime
})+\sigma (z_{p},z_{p}^{\prime })$. The corresponding symplectic groups are
denoted by $Sp(2n,\sigma )$ and $Sp(4n,\sigma _{x}\oplus \sigma _{p})$.

We write $\mathcal{S}(\mathbb{R}^n)$ for the Schwartz space of rapidly
decreasing test functions on $\mathbb{R}^n$ and $\mathcal{S}^{\prime }(%
\mathbb{R}^n)$ for the dual space of tempered distributions. The notation $%
\langle a,t\rangle $ stands for the action of the distribution $a$ on the
test function $t$. Symbols (or classical observables) on $\mathbb{R}^{2n}$
are denoted by small Latin letters $a,b,..$; if they have support on $%
\mathbb{R}^{4n}$ they are denoted by capital Latin letters $A,B,...$. The
wave functions in $L^2(\mathbb{R}^n)$ are denoted by small Greek letters $%
\psi,\phi,...$ and those in $L^2(\mathbb{R}^{2n})$ by capital Greek letters $%
\Psi,\Phi,...$. The standard inner product on $L^2(\mathbb{R}^{n})$ is
written $(\psi|\phi)$; on $L^2(\mathbb{R}^{2n})$ is $((\Psi|\Phi))$. The
corresponding norms are $||\psi||$ and $|||\Psi|||$.

Operators acting on functions (or distributions) on $\mathbb{R}^{n}$ are
usually denoted by small Latin letters with a hat $\hat{a},\hat{b},...$ and
those acting on phase space functions or distributions usually by $\hat{A},%
\hat{B},...$ if they are in the Schr\"{o}dinger representation and by $%
\tilde{A},\tilde{B},...$ if they are in the Moyal representation. Weyl
pseudo-differential operators display a $W$-superscript. The superscript $%
\ast $ denotes both the complex conjugation (for functions) and the adjoint
(for operators).

The unitary Fourier transform and its inverse on $L^{2}(\mathbb{R}^{n})$ are
defined by
\begin{equation*}
\hat{\phi}(p)=\mathcal{F}_{x}[\phi (x)](p)=\frac{1}{(2\pi )^{n/2}}\int_{%
\mathbb{R}^{n}}e^{-ix\cdot p}\phi (x)\,dx
\end{equation*}%
and
\begin{equation*}
\mathcal{F}_{x}^{-1}[\phi (p)](x)=\frac{1}{(2\pi )^{n/2}}\int_{\mathbb{R}%
^{n}}e^{ix\cdot p}\phi (p)\,dp.
\end{equation*}%
The same notation will also be used for the generalized (i.e.
distributional) Fourier transform.

\section{Phase space quantum mechanics in the Schr\"odinger representation:
General results}

The key object relating the configuration and the phase space Schr\"odinger
representations of the Heisenberg algebra is the map
\begin{equation}
T_{\chi} : L^2(\mathbb{R}^n) \longrightarrow L^2(\mathbb{R}^{2n}); \quad
\psi \to T_{\chi}[\psi]:= \psi \otimes \chi^*  \label{II1}
\end{equation}
which is defined for some fixed $\chi \in L^2(\mathbb{R}^n)$ such that $%
||\chi||=1$. Since, for all $\phi,\psi\in L^2(\mathbb{R}^n)$
\begin{equation*}
((T_{\chi}[\psi]\,|\, T_{\chi}[\phi]))=(\psi\,|\,\phi)
\end{equation*}
$T_{\chi}$ is an isometry from $L^2(\mathbb{R}^n)$ into a subspace ${%
\mathcal{H}}_\chi$ of $L^2(\mathbb{R}^{2n})$. Hence the map $T_\chi$ is also
linear, injective and continuous. We then have the following

\begin{definition}[States]
% $[$ States $]$

The states of the new (phase space) formulation of quantum mechanics are the
wave functions
\begin{equation}
\Psi(x,p)=T_{\chi}[\psi](x,p)=\psi(x) \chi^*(p)  \label{II2}
\end{equation}
which form the Hilbert space ${\mathcal{H}}_\chi =$Ran $T_\chi \subset L^2(%
\mathbb{R}^{2n})$ (Corollary 2.5, bellow).
\end{definition}

Notice that:

\begin{remark}
The choice of a particular $\chi$ plays here the same role as the imposition
of the classical constraints (\ref{Icc}) in the classical double phase space
formulation. In fact, just like in the classical case, this imposition
tantamount to the complete specification of the initial data for the
unphysical sector of the theory. Notice also that the quantum states do not
satisfy the strong (Dirac) version of the quantum constraints \cite{Henneaux}
\begin{equation*}
(\hat p-p_0)\Psi=0 \quad , \quad (\hat\xi_p-\xi_{p0}) \Psi=0
\end{equation*}
which are, in fact, incompatible (because $\hat p,\hat\xi_p$ do not
commute). They may, however satisfy a weaker version
\begin{equation*}
((\Psi|(\hat p-p_0)\Psi))=0 \quad , \quad ((\Psi|(\hat \xi_p-\xi_{p0}))
\Psi))=0
\end{equation*}
which may be seen as a necessary, but not sufficient, condition for the
states to be of the form (\ref{II2}).
\end{remark}

A key property of $T_{\chi }$ is given by:

\begin{theorem}
The adjoint $T_{\chi }^{\ast }$ is the map
\begin{equation*}
T_{\chi }^{\ast }:L^{2}(\mathbb{R}^{2n})\longrightarrow L^{2}(\mathbb{R}^{n})
\end{equation*}%
\begin{equation*}
\Psi (x,p)\longrightarrow T_{\chi }^{\ast }[\Psi ](x)=\int_{\mathbb{R}%
^{n}}\Psi (x,p)\chi (p)\,dp.
\end{equation*}
\end{theorem}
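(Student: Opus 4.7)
The plan is to verify the defining identity of the adjoint directly from the inner products on $L^2(\mathbb{R}^n)$ and $L^2(\mathbb{R}^{2n})$. First I would introduce the candidate operator
\begin{equation*}
R_\chi : L^2(\mathbb{R}^{2n}) \longrightarrow L^2(\mathbb{R}^n), \qquad R_\chi[\Psi](x) = \int_{\mathbb{R}^n} \Psi(x,p)\chi(p)\,dp,
\end{equation*}
and check that the integral makes sense and $R_\chi$ is a bounded operator. By the Cauchy--Schwarz inequality in the $p$ variable,
\begin{equation*}
|R_\chi[\Psi](x)|^2 \leq \|\chi\|^2 \int_{\mathbb{R}^n} |\Psi(x,p)|^2\,dp,
\end{equation*}
and, since $\|\chi\|=1$, integrating in $x$ and applying Fubini--Tonelli yields $\|R_\chi[\Psi]\| \leq |||\Psi|||$. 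In particular $R_\chi[\Psi] \in L^2(\mathbb{R}^n)$ and the defining integral converges absolutely for almost every $x$.

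Next I would verify the adjoint relation $((T_\chi[\psi] \,|\, \Psi)) = (\psi \,|\, R_\chi[\Psi])$ for arbitrary $\psi \in L^2(\mathbb{R}^n)$ and $\Psi \in L^2(\mathbb{R}^{2n})$. Writing out the left-hand side,
\begin{equation*}
((T_\chi[\psi] \,|\, \Psi)) = \int_{\mathbb{R}^{2n}} \psi^*(x)\chi(p)\Psi(x,p)\,dx\,dp,
\end{equation*}
and the integrand is absolutely integrable: by Cauchy--Schwarz in both variables its $L^1$-norm is bounded by $\|\psi\|\,\|\chi\|\,|||\Psi|||$. Hence Fubini's theorem applies, and swapping the order of integration gives
\begin{equation*}
\int_{\mathbb{R}^n} \psi^*(x)\left(\int_{\mathbb{R}^n}\chi(p)\Psi(x,p)\,dp\right) dx = (\psi \,|\, R_\chi[\Psi]),
\end{equation*}
as required. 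Uniqueness of the Hilbert-space adjoint then gives $T_\chi^* = R_\chi$.

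The argument is essentially a routine application of Fubini and Cauchy--Schwarz, with no genuine obstacle; the only subtlety worth pointing out is that $R_\chi$ must a priori be defined as an $L^2$-valued operator (and not only pointwise), which is exactly what the Cauchy--Schwarz estimate in the first step accomplishes. Because this same estimate also yields $\|R_\chi\| \leq 1$, one recovers from $T_\chi^* = R_\chi$ the expected consistency $\|T_\chi\| = \|T_\chi^*\| \leq 1$, in agreement with the isometry property of $T_\chi$ noted just before the theorem.
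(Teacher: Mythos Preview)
Your proof is correct and follows essentially the same route as the paper: both compute the inner product $((T_\chi[\psi]\,|\,\Psi))$, swap the order of integration, and read off the formula for $T_\chi^*[\Psi]$. The only difference is one of presentation and rigor: the paper starts from the abstract characterization of the adjoint and derives the integral formula, whereas you name the candidate operator $R_\chi$ first and then verify the adjoint identity, supplying explicit Cauchy--Schwarz and Fubini justifications that the paper leaves implicit.
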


\begin{proof}
Consider the most general states
\begin{equation*}
(\psi (x),\Psi (x,p))\in L^{2}(\mathbb{R}^{n})\times L^{2}(\mathbb{R}^{2n})
\end{equation*}%
for which
\begin{equation}
(\psi \,|\,\phi )=((\Psi \,|\,T_{\chi }[\phi ])),\quad \forall \phi \in
L^{2}(\mathbb{R}^{n}).  \label{II3}
\end{equation}%
Then:

(1) $\Psi$ belongs to the domain of the adjoint $D(T^*_{\chi})$ and

(2) $T^*_{\chi} [\Psi]=\psi$.\newline

Eq.(\ref{II3}) is equivalent to
\begin{eqnarray}
&&(\psi\,|\,\phi)=((\Psi\,|\,\phi \otimes \chi^*)), \quad \forall \phi \in
L^2(\mathbb{R}^n)  \notag \\
&\Longleftrightarrow & \int_{\mathbb{R}^n} \phi(x)\psi^*(x)\, dx= \int_{%
\mathbb{R}^n} \, \phi(x) \left[ \int_{\mathbb{R}^n} \Psi^*(x,p) \chi^*(p) \,
dp \right]\, dx, \quad \forall \phi \in L^2(\mathbb{R}^n)  \notag \\
&\Longleftrightarrow & \psi(x) = \int_{\mathbb{R}^n} \Psi(x,p) \chi(p)\, dp
\notag
\end{eqnarray}
It also follows that $\psi \in L^2(\mathbb{R}^n)$ if $\Psi \in L^2(\mathbb{R}%
^{2n})$ and so $D(T^*_{\chi})=L^2(\mathbb{R}^{2n})$, which concludes the
proof.
\end{proof}

\begin{remark}
Let $\mathcal{H}_\chi=$Ran $T_\chi$. The restriction
\begin{equation*}
T^*_\chi:\mathcal{H}_\chi \longrightarrow L^2(\mathbb{R}^n)
\end{equation*}
is a one to one map and is the inverse of $T_\chi$. The proof is trivial.
\end{remark}

Moreover:

\begin{corollary}
The operator
\begin{equation*}
P_{\chi}:L^2(\mathbb{R}^{2n}) \longrightarrow L^2(\mathbb{R}^{2n}),\quad
P_{\chi}:=T_{\chi} \circ T^*_{\chi}
\end{equation*}
is a projector and
\begin{equation*}
\mathcal{H}_{\chi}=\mbox{Ran}\, T_{\chi}=\mbox{Ran}\, P_{\chi}
\end{equation*}
is a Hilbert space, Hilbert subspace of $L^2(\mathbb{R}^{2n})$, with the
inner product
\begin{equation}
((T_{\chi}[\phi]\,|\,T_{\chi}[\psi]))=(\phi\,|\,\psi)  \label{II4}
\end{equation}
\end{corollary}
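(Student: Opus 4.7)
The plan is to verify the three assertions in sequence, using only the isometry of $T_{\chi}$ (established in the paragraph preceding the definition of states) and the explicit form of $T_{\chi}^{\ast}$ given by Theorem~2.3.

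First I would show that $P_{\chi}$ is an orthogonal projector, i.e.\ $P_{\chi}^{2}=P_{\chi}$ and $P_{\chi}^{\ast}=P_{\chi}$. The key observation is that, since $T_{\chi}$ is an isometry from $L^{2}(\mathbb{R}^{n})$ into $L^{2}(\mathbb{R}^{2n})$, the composition $T_{\chi}^{\ast}T_{\chi}$ equals the identity on $L^{2}(\mathbb{R}^{n})$; this is an immediate consequence of $((T_{\chi}[\psi]\,|\,T_{\chi}[\phi]))=(\psi\,|\,\phi)=(T_{\chi}^{\ast}T_{\chi}\psi\,|\,\phi)$ valid for every $\psi,\phi$. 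From this, $P_{\chi}^{2}=T_{\chi}(T_{\chi}^{\ast}T_{\chi})T_{\chi}^{\ast}=T_{\chi}T_{\chi}^{\ast}=P_{\chi}$. Self-adjointness is even easier: since $T_{\chi}$ is bounded (being an isometry) and everywhere defined, $(T_{\chi}T_{\chi}^{\ast})^{\ast}=T_{\chi}^{\ast\ast}T_{\chi}^{\ast}=T_{\chi}T_{\chi}^{\ast}$.

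Next I would identify the two ranges. The inclusion $\mathrm{Ran}\,P_{\chi}\subseteq\mathrm{Ran}\,T_{\chi}$ is immediate from the definition $P_{\chi}=T_{\chi}\circ T_{\chi}^{\ast}$. For the reverse inclusion, take $\Psi=T_{\chi}[\psi]\in\mathrm{Ran}\,T_{\chi}$ and compute, using $T_{\chi}^{\ast}T_{\chi}=I$,
\begin{equation*}
P_{\chi}\Psi=T_{\chi}T_{\chi}^{\ast}T_{\chi}[\psi]=T_{\chi}[\psi]=\Psi,
\end{equation*}
so $\Psi\in\mathrm{Ran}\,P_{\chi}$. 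This simultaneously proves that $P_{\chi}$ acts as the identity on $\mathcal{H}_{\chi}$, which is the standard characterization of the orthogonal projector onto that subspace.

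Finally, the Hilbert-space structure on $\mathcal{H}_{\chi}$ comes for free: the range of an orthogonal projector on a Hilbert space is automatically closed (it is the kernel of $I-P_{\chi}$, which is continuous), so $\mathcal{H}_{\chi}$ is closed in $L^{2}(\mathbb{R}^{2n})$ and therefore itself a Hilbert space when equipped with the restricted inner product $((\,\cdot\,|\,\cdot\,))$. Equation~(\ref{II4}) is then merely a restatement of the isometry property of $T_{\chi}$ already used above. I do not anticipate any real obstacle; the only point that merits care is making explicit the identity $T_{\chi}^{\ast}T_{\chi}=I_{L^{2}(\mathbb{R}^{n})}$, since every subsequent step depends on it, but this follows directly from the isometry remark preceding Definition~2.1 (or, equivalently, by a short calculation using the explicit formula for $T_{\chi}^{\ast}$ together with $\|\chi\|=1$).
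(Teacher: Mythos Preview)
Your proposal is correct and follows essentially the same route as the paper: both arguments hinge on the identity $T_{\chi}^{\ast}T_{\chi}=I$ (from the isometry of $T_{\chi}$) to obtain idempotence, use $T_{\chi}^{\ast\ast}=T_{\chi}$ for self-adjointness, and then invoke the general fact that the range of an orthogonal projector is closed. Your treatment of the equality $\mathrm{Ran}\,T_{\chi}=\mathrm{Ran}\,P_{\chi}$ via the two inclusions is slightly more explicit than the paper's one-line remark, but the content is the same.
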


\begin{proof}
$P_{\chi}$ is an orthogonal projector because (i) $P_{\chi}=P_{\chi}^*$ and
(ii) $P_{\chi}P_{\chi}=P_{\chi}$. The first identity follows from $%
T_{\chi}^{**}=T_{\chi}$:
\begin{equation*}
((\Phi\,|\,P_{\chi}[\Psi]))=((\Phi\,|\,T_{\chi}T^*_{\chi}[\Psi]))=(T^*_{%
\chi}[\Phi]\,|\,T^*_{\chi}[\Psi])
\end{equation*}
\begin{equation*}
=((T_{\chi}T^*_{\chi}[\Phi]\,|\,\Psi))=((P_{\chi}[\Phi]\,|\,\Psi))\quad ,
\quad \forall \Phi,\Psi \in L^2(\mathbb{R}^{2n})
\end{equation*}
To prove (ii) just notice that $T^*_{\chi}T_{\chi}$ is the identity on $L^2(%
\mathbb{R}^n)$.

Since the range of $T^*_{\chi}$ is $L^2(\mathbb{R}^n)$ which is the domain
of $T_{\chi}$ (and $P_{\chi}=T_{\chi}T^*_{\chi}$) we have Ran $T_{\chi}=$Ran
$P_{\chi}$. On the other hand, the range of a projector in a Hilbert space
is a closed linear subspace of that Hilbert space, thus also a Hilbert
space. Hence, ${\mathcal{H}}_{\chi}=$Ran $P_{\chi}$ is a Hilbert space,
subspace of $L^2(\mathbb{R}^{2n})$ and with the same inner product. The
identity (\ref{II4}) follows trivially .
\end{proof}

\begin{remark}
Notice that there are many possible choices for the space of states ${%
\mathcal{H}}_{\chi}$. This is related, of course, to the possibility of
choosing $\chi$ arbitrarily in $L^2(\mathbb{R}^n)$. However, these
(different) phase space formulations are related by the one to one,
orthogonal transformations
\begin{equation*}
T_{\chi_1} T^*_{\chi_2}:{\mathcal{H}}_{\chi_2} \longrightarrow {\mathcal{H}}%
_{\chi_1}
\end{equation*}
and are thus completely equivalent.
\end{remark}

We now introduce operators on ${\mathcal{H}}_{\chi}$:

\begin{definition}[Phase space representation]
%$[$ Phase space representation $]$

For each "configuration space" operator
\begin{equation*}
\hat a:D(\hat a)\subset L^2(\mathbb{R}^n) \longrightarrow L^2(\mathbb{R}^n)
\end{equation*}
we define an operator on ${\mathcal{H}}_{\chi}$
\begin{equation}
\hat A:D(\hat A) \subset {\mathcal{H}}_{\chi} \longrightarrow {\mathcal{H}}%
_{\chi}  \label{II5}
\end{equation}
given by
\begin{equation*}
D(\hat A)=T_{\chi}[D(\hat a)]
\end{equation*}
\begin{equation*}
\hat A \Psi =T_{\chi} \hat a T_{\chi}^* \Psi \quad , \quad \forall \Psi \in
D(\hat A)
\end{equation*}
and call $\hat A$ the "phase space operator associated with $\hat a$" or,
more precisely, the "${\mathcal{H}}_{\chi}$-representation of $\hat a$".
\end{definition}

\begin{remark}
Notice that for
\begin{equation*}
\hat{a}=\hat{x}_{i}=\mbox{multiplication by}\,\,x_{i}\,\,\mbox{in}\,\,L^{2}(%
\mathbb{R}^{n})
\end{equation*}%
the associated operator is
\begin{equation*}
\hat{A}=\hat{X}_{i}=T_{\chi }\hat{x}_{i}T_{\chi }^{\ast }=%
\mbox{multiplication
by}\,\,x_{i}\,\,\mbox {in}\,\,{\mathcal{H}}_{\chi }\subset L^{2}(\mathbb{R}%
^{2n})
\end{equation*}%
and for
\begin{equation*}
\hat{a}=\hat{\xi}_{i}=-i\partial _{x_{i}}\quad (\mbox{in}\,\,L^{2}(\mathbb{R}%
^{n}))
\end{equation*}%
we have
\begin{equation*}
\hat{A}=\hat{\Xi}_{i}=T_{\chi }\hat{\xi}_{i}T_{\chi }^{\ast }=-i\partial
_{x_{i}}\quad (\mbox{in}\,\,{\mathcal{H}}_{\chi }\subset L^{2}(\mathbb{R}%
^{2n})).
\end{equation*}%
Hence, each map $T_{\chi }$ generates a phase space Schr\"{o}dinger
representation of the Heisenberg algebra from the original configuration
space Schr\"{o}dinger representation.
\end{remark}

We proceed with the study of some properties of the operators $\hat A$.

\begin{theorem}
Let $\hat A$ be the phase space operator associated with $\hat a$ in the
sense of the Definition 2.7. Then

(1) $\hat A$ is symmetric iff $\hat a$ is symmetric.

(2) $\hat A$ is self-adjoint (as an operator in $\mathcal{H}_\chi$) iff $%
\hat a$ is self-adjoint.
\end{theorem}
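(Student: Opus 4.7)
The plan is to exploit the fact that, viewed as a map from $L^2(\mathbb{R}^n)$ onto $\mathcal{H}_\chi$, the isometry $T_\chi$ is a unitary bijection whose inverse is the restriction of $T_\chi^*$ to $\mathcal{H}_\chi$ (Remark 2.4). The operator $\hat A = T_\chi \hat a T_\chi^*$ is therefore the conjugation of $\hat a$ by a unitary between two Hilbert spaces, and the two claims reduce to the well-known fact that symmetry and self-adjointness are invariant under unitary equivalence. The first important identity I need is $T_\chi^* T_\chi = \operatorname{id}_{L^2(\mathbb{R}^n)}$, which was already established in the proof of Corollary 2.5; combined with $D(\hat A) = T_\chi[D(\hat a)]$, it guarantees that for every $\Psi \in D(\hat A)$ there is a unique $\psi = T_\chi^*\Psi \in D(\hat a)$ with $\Psi = T_\chi\psi$.

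For (1), given $\Psi, \Phi \in D(\hat A)$ I would write $\Psi = T_\chi \psi$, $\Phi = T_\chi\phi$ with $\psi,\phi \in D(\hat a)$ and compute
\begin{equation*}
((\hat A \Psi \,|\, \Phi)) = ((T_\chi \hat a \psi \,|\, T_\chi \phi)) = (\hat a\psi \,|\, \phi),
\end{equation*}
using the isometry property (\ref{II4}). The symmetry of $\hat a$ then converts this into $(\psi \,|\, \hat a\phi) = ((T_\chi\psi \,|\, T_\chi \hat a\phi)) = ((\Psi \,|\, \hat A\Phi))$, proving that $\hat a$ symmetric implies $\hat A$ symmetric. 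The converse is identical, run in reverse: given $\hat A$ symmetric, for $\psi,\phi\in D(\hat a)$ one sets $\Psi = T_\chi\psi$, $\Phi = T_\chi\phi \in D(\hat A)$ and recovers $(\hat a\psi \,|\, \phi) = (\psi \,|\, \hat a\phi)$ by the same chain of equalities read in the other direction.

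For (2), the key step is to identify the Hilbert-space adjoint of $\hat A$ taken within $\mathcal{H}_\chi$. I would argue that $D(\hat A^*) = T_\chi[D(\hat a^*)]$ and $\hat A^* = T_\chi \hat a^* T_\chi^*$. To do so, fix $\Phi \in \mathcal{H}_\chi$ and set $\phi = T_\chi^*\Phi$. By definition $\Phi \in D(\hat A^*)$ iff there exists $\Phi' \in \mathcal{H}_\chi$ such that $((\hat A\Psi \,|\, \Phi)) = ((\Psi \,|\, \Phi'))$ for all $\Psi \in D(\hat A)$. Parametrising $\Psi = T_\chi\psi$ with $\psi \in D(\hat a)$ and using (\ref{II4}), this condition is equivalent to $(\hat a\psi \,|\, \phi) = (\psi \,|\, T_\chi^*\Phi')$ for all $\psi \in D(\hat a)$, i.e.\ to $\phi \in D(\hat a^*)$ together with $T_\chi^*\Phi' = \hat a^*\phi$, equivalently $\Phi' = T_\chi \hat a^*\phi = T_\chi \hat a^* T_\chi^*\Phi$. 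Hence $\hat A^*$ is unitarily similar to $\hat a^*$ via the same $T_\chi$. The equivalence $\hat A = \hat A^* \Longleftrightarrow \hat a = \hat a^*$ (with equality of domains) then follows by applying $T_\chi^*$ on the left and $T_\chi$ on the right (or conversely).

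The only substantive obstacle I anticipate is the bookkeeping of domains in step (2): one must treat $\hat A$ strictly as an operator in the Hilbert space $\mathcal{H}_\chi$ rather than as a densely defined operator on all of $L^2(\mathbb{R}^{2n})$ (where its adjoint would be different because $T_\chi^*$ is not injective on $L^2(\mathbb{R}^{2n})$). Once this is made explicit, by restricting test vectors $\Phi,\Phi'$ to $\mathcal{H}_\chi$ throughout, the unitary-equivalence argument goes through cleanly and both implications of (2) drop out.
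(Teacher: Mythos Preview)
Your proposal is correct and follows essentially the same approach as the paper: both proofs use the isometry property of $T_\chi$ to translate the symmetry condition back and forth for part (1), and for part (2) both establish that $D(\hat A^\ast)=T_\chi[D(\hat a^\ast)]$ and $\hat A^\ast=T_\chi\hat a^\ast T_\chi^\ast$ by unpacking the defining relation of the adjoint, whence $\hat A=\hat A^\ast\Longleftrightarrow\hat a=\hat a^\ast$. Your explicit remark that the adjoint must be taken in $\mathcal{H}_\chi$ (not in all of $L^2(\mathbb{R}^{2n})$) is a useful clarification that the paper leaves implicit.
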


\begin{proof}
1) From
\begin{equation*}
\hat A \Psi=T_{\chi} \hat a T_{\chi}^*\, T_{\chi}[\psi]=T_{\chi}[\hat a
\psi], \quad \forall \, \Psi=T_{\chi}[\psi] \in D(\hat A)
\end{equation*}
it follows that
\begin{equation*}
((\hat A\Psi\,|\,\Phi))=((\Psi\,|\,\hat A \Phi)), \quad \forall \Psi,\Phi
\in D(\hat A)
\end{equation*}
\begin{equation*}
\Longleftrightarrow ((T_{\chi}[\hat a
\psi]\,|\,T_{\chi}[\phi]))=((T_{\chi}[\psi]\,|\,T_{\chi}[\hat a \phi])),
\quad \forall \psi,\phi \in D(\hat a)
\end{equation*}
\begin{equation*}
\Longleftrightarrow (\hat a \psi\,|\,\phi)=(\psi\,|\,\hat a \phi) ,\quad
\forall \psi,\phi \in D(\hat a)
\end{equation*}
and so $\hat A$ is symmetric in $D(\hat A)$ iff $\hat a$ is symmetric in $%
D(\hat a)$.\newline

2) Let $\hat{a}$ be defined on a dense domain $D(\hat{a})\subset L^{2}(%
\mathbb{R}^{n})$ and consider the most general solution $(\psi ,\xi )\in
L^{2}(\mathbb{R}^{n})\times L^{2}(\mathbb{R}^{n})$ of
\begin{equation}
(\xi \,|\,\phi )=(\psi \,|\,\hat{a}\phi ),\quad \forall \phi \in D(\hat{a})
\end{equation}%
Then $\psi \in D(\hat{a}^{\ast })$ and $\hat{a}^{\ast }\psi =\xi $. For $%
\Psi =T_{\chi }[\psi ]\Longleftrightarrow \psi =T_{\chi }^{\ast }[\Psi ]$
and $\Xi =T_{\chi }[\xi ]\Longleftrightarrow \xi =T_{\chi }^{\ast }[\Xi ]$,
the previous equation is equivalent to
\begin{equation*}
((\Xi \,|\,\Phi ))=((\Psi \,|\,\hat{A}\Phi )),\quad \forall \Phi \in D(\hat{A%
})
\end{equation*}%
and so (since $D(\hat{A})=T_{\chi }[D(\hat{a})]\subset \mathcal{H}_{\chi }$
is also dense) $\Psi \in D(\hat{A}^{\ast })$ and $\hat{A}^{\ast }\Psi =\Xi $%
. Hence
\begin{equation*}
D(\hat{A}^{\ast })=T_{\chi }[D(\hat{a}^{\ast })]\quad ,\quad \hat{A}^{\ast
}=T_{\chi }\hat{a}^{\ast }T_{\chi }^{\ast }.
\end{equation*}%
It follows that $\hat{a}=\hat{a}^{\ast }\Longleftrightarrow \hat{A}=\hat{A}%
^{\ast }$ as claimed.
\end{proof}

\begin{theorem}[Spectral results]
%$[$ Spectral results $]$

The state $\Psi_\lambda \in {\mathcal{H}}_{\chi}$ is a solution of the
eigenvalue equation
\begin{equation}
\hat A \Psi_\lambda=\lambda\Psi_\lambda  \label{IeeA}
\end{equation}
iff $\Psi_\lambda=T_{\chi}[\psi_\lambda]$ for some $\psi_\lambda\in L^2(%
\mathbb{R}^n)$ satisfying
\begin{equation}
\hat a \psi_\lambda=\lambda\psi_\lambda  \label{Ieea}
\end{equation}
Hence, the two operators $\hat A$ and $\hat a$ display the same spectrum and
their eigenfunctions are related by the $T_{\chi}$-transformation.
\end{theorem}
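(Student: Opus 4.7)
The plan is to prove the two implications separately, leveraging the isometry identities $T_\chi^*T_\chi = I_{L^2(\mathbb{R}^n)}$ (established in the proof of Corollary 2.5) and $T_\chi T_\chi^* = P_\chi$, together with the definitions $D(\hat A) = T_\chi[D(\hat a)]$ and $\hat A = T_\chi\hat a T_\chi^*$ from Definition 2.7. The spectral claim is essentially a unitary-equivalence statement restricted to the closed subspace $\mathcal{H}_\chi$, and the argument is a direct calculation.

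For the implication $(\Leftarrow)$, I would start from $\hat a\psi_\lambda = \lambda\psi_\lambda$ with $\psi_\lambda \in D(\hat a)$. Setting $\Psi_\lambda := T_\chi[\psi_\lambda]$ places $\Psi_\lambda$ in $T_\chi[D(\hat a)] = D(\hat A)$, and then
\begin{equation*}
\hat A\Psi_\lambda = T_\chi\hat a T_\chi^*T_\chi\psi_\lambda = T_\chi\hat a\psi_\lambda = \lambda T_\chi\psi_\lambda = \lambda\Psi_\lambda,
\end{equation*}
using $T_\chi^*T_\chi = I$ to collapse the middle factor.

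For the implication $(\Rightarrow)$, assume $\hat A\Psi_\lambda = \lambda\Psi_\lambda$ with $\Psi_\lambda \in D(\hat A) \subset \mathcal{H}_\chi$. By Remark 2.4, $T_\chi^*$ restricted to $\mathcal{H}_\chi$ inverts $T_\chi$, so setting $\psi_\lambda := T_\chi^*[\Psi_\lambda]$ gives $\Psi_\lambda = T_\chi[\psi_\lambda]$, and $\psi_\lambda \in D(\hat a)$ because $D(\hat A) = T_\chi[D(\hat a)]$. Applying $T_\chi^*$ to the eigenvalue equation and using $T_\chi^*T_\chi = I$,
\begin{equation*}
T_\chi^*\hat A T_\chi\psi_\lambda = T_\chi^*T_\chi\hat a T_\chi^*T_\chi\psi_\lambda = \hat a\psi_\lambda,
\end{equation*}
while $T_\chi^*(\lambda\Psi_\lambda) = \lambda\psi_\lambda$. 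Hence $\hat a\psi_\lambda = \lambda\psi_\lambda$.

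The two directions together give the bijective correspondence between eigenpairs, from which equality of the point spectra of $\hat a$ and $\hat A$ follows. The only subtle point to check is the domain bookkeeping — making sure that $\Psi_\lambda \in D(\hat A)$ really does force $\psi_\lambda := T_\chi^*[\Psi_\lambda] \in D(\hat a)$ — but this is immediate from $D(\hat A) = T_\chi[D(\hat a)]$ and injectivity of $T_\chi$. I do not expect any analytic obstacle, since all manipulations are algebraic identities between bounded operators $T_\chi,T_\chi^*$ and the (possibly unbounded) operators $\hat a,\hat A$ applied within their declared domains.
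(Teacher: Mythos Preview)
Your proof is correct and follows essentially the same approach as the paper: both directions are obtained by applying $T_\chi$ and $T_\chi^\ast$ to the respective eigenvalue equations, using the intertwining identities $T_\chi\hat a = \hat A T_\chi$ and $T_\chi^\ast\hat A = \hat a T_\chi^\ast$ (which you derive inline from $\hat A = T_\chi\hat a T_\chi^\ast$ and $T_\chi^\ast T_\chi = I$). Your extra attention to the domain bookkeeping is a welcome addition but does not change the underlying argument.
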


\begin{proof}
The proof is trivial. In one direction it follows from applying $T_{\chi }$
to both sides of the eigenvalue equation (\ref{Ieea}) and, in the other
direction, from applying $T_{\chi }^{\ast }$ to both sides of (\ref{IeeA}).
We also need to notice that
\begin{equation}
T_{\chi }[\hat{a}\psi ]=\hat{A}T_{\chi }[\psi ]=\hat{A}\Psi  \label{II-I1}
\end{equation}%
and
\begin{equation}
T_{\chi }^{\ast }[\hat{A}\Psi ]=\hat{a}T_{\chi }^{\ast }[\Psi ]=\hat{a}\psi .
\label{II-I2}
\end{equation}
\end{proof}

\begin{corollary}
We conclude that the physical predictions of the phase space and the
configuration space formulations of quantum mechanics are the same. For the
average values we have from Definitions 2.1 and 2.7
\begin{equation*}
((\Psi \,|\,\hat{A}\Phi ))=(\psi \,|\,\hat{a}\phi )
\end{equation*}%
and for the transition amplitudes (and probability amplitudes) from Theorem
2.10
\begin{equation*}
((\Psi \,|\,\Psi _{\lambda }))=(\psi \,|\,\psi _{\lambda }).
\end{equation*}
\end{corollary}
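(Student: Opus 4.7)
The plan is to derive both identities directly from the isometry of $T_\chi$ recorded as (\ref{II4}) in Corollary 2.5, together with the intertwining relation $\hat{A}T_\chi = T_\chi \hat{a}$ that was noted as (\ref{II-I1}) in the proof of Theorem 2.10. The workhorse identity will be $T_\chi^{\ast} T_\chi = \mathrm{Id}_{L^2(\mathbb{R}^n)}$, already used in the proof of Corollary 2.5 to verify $P_\chi^2=P_\chi$.

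For the average-value identity I would start by writing the states in their canonical form $\Psi = T_\chi[\psi]$ and $\Phi = T_\chi[\phi]$, as forced by Definition 2.1. Using Definition 2.7 together with $T_\chi^{\ast} T_\chi = \mathrm{Id}$, one gets
\[
\hat{A}\Phi \;=\; T_\chi\, \hat{a}\, T_\chi^{\ast}\, T_\chi[\phi] \;=\; T_\chi[\hat{a}\phi].
\]
Substituting this into the double inner product and invoking (\ref{II4}) then yields
\[
((\Psi\,|\,\hat{A}\Phi)) \;=\; ((T_\chi[\psi]\,|\,T_\chi[\hat{a}\phi])) \;=\; (\psi\,|\,\hat{a}\phi),
\]
which is the first claim.

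The transition-amplitude identity is even shorter. By Theorem 2.10, every eigenfunction $\Psi_\lambda\in\mathcal{H}_\chi$ of $\hat{A}$ is of the form $\Psi_\lambda = T_\chi[\psi_\lambda]$ with $\psi_\lambda$ an eigenfunction of $\hat{a}$ for the same eigenvalue, so a single application of (\ref{II4}) gives
\[
((\Psi\,|\,\Psi_\lambda)) \;=\; ((T_\chi[\psi]\,|\,T_\chi[\psi_\lambda])) \;=\; (\psi\,|\,\psi_\lambda).
\]

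There is no real obstacle: the corollary is a direct bookkeeping consequence of the isometry and intertwining properties already established in the preceding results. The only point worth flagging is the implicit domain condition --- for the average-value identity to be meaningful one tacitly needs $\phi\in D(\hat{a})$, equivalently $\Phi\in D(\hat{A})$, which is automatic from $D(\hat{A}) = T_\chi[D(\hat{a})]$.
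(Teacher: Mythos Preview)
Your proof is correct and follows exactly the route the paper indicates: the first identity comes from writing $\hat A\Phi=T_\chi[\hat a\phi]$ via Definition 2.7 and then applying the isometry relation (\ref{II4}), while the second is an immediate application of (\ref{II4}) once Theorem 2.10 supplies $\Psi_\lambda=T_\chi[\psi_\lambda]$. The paper does not spell out a separate proof for this corollary, but the one-line justifications it gives (``from Definitions 2.1 and 2.7'' and ``from Theorem 2.10'') unpack precisely into what you wrote.
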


The same conclusion is valid for the dynamics,

\begin{theorem}[Dynamics]
%$[$ Dynamics $]$

Let $\hat{a}:D(\hat{a})\subset L^{2}(\mathbb{R}^{n})\rightarrow L^{2}(%
\mathbb{R}^{n})$ be self-adjoint. Then $\psi (\cdot ,t)\in L^{2}(\mathbb{R}%
^{n})$ is the solution of the initial value problem
\begin{equation*}
i\frac{\partial \psi }{\partial t}=\hat{a}\psi \quad ,\quad \psi (\cdot
,0)=\psi _{0}\in D(\hat{a})
\end{equation*}%
if and only if $\Psi =T_{\chi }[\psi ]$ is the solution of
\begin{equation*}
i\frac{\partial \Psi }{\partial t}=\hat{A}\Psi \quad ,\quad \Psi (\cdot
,0)=T_{\chi }[\psi _{0}]\in D(\hat{A}).
\end{equation*}
\end{theorem}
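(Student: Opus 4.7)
The plan is to reduce the dynamical statement to the intertwining identities (\ref{II-I1}) and (\ref{II-I2}) already established in the proof of Theorem 2.10, exploiting the crucial fact that $T_\chi$ is a bounded linear isometry (so is $T_\chi^\ast$ on $\mathcal{H}_\chi$). Because bounded linear operators commute with the strong $t$-derivative, for any strongly differentiable curve $t\mapsto \psi(t)\in L^2(\mathbb R^n)$ we have $\partial_t T_\chi[\psi(t)]=T_\chi[\partial_t\psi(t)]$, and symmetrically for $T_\chi^\ast$ on curves in $L^2(\mathbb R^{2n})$. This is the only analytic ingredient needed beyond the algebraic identities; existence and uniqueness of both Cauchy problems are guaranteed by Stone's theorem applied to $\hat a$ and (via Theorem 2.9) to $\hat A$.

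For the forward direction, I would start from a solution $\psi(\cdot,t)\in D(\hat a)$ of $i\partial_t\psi=\hat a\psi$ with $\psi(\cdot,0)=\psi_0$, and set $\Psi(\cdot,t):=T_\chi[\psi(\cdot,t)]$. By Definition 2.7, $\Psi(\cdot,t)\in D(\hat A)$ for all $t$, and $\Psi(\cdot,0)=T_\chi[\psi_0]$. Commuting $T_\chi$ with $\partial_t$ and then using (\ref{II-I1}) gives
\begin{equation*}
i\frac{\partial\Psi}{\partial t}=i\,T_\chi\!\left[\frac{\partial\psi}{\partial t}\right]=T_\chi[\hat a\psi]=\hat A\,T_\chi[\psi]=\hat A\Psi,
\end{equation*}
which is the desired phase space evolution equation together with the correct initial datum.

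For the converse, I would start from $\Psi(\cdot,t)\in D(\hat A)\subset\mathcal H_\chi$ solving $i\partial_t\Psi=\hat A\Psi$ with $\Psi(\cdot,0)=T_\chi[\psi_0]$. By Remark 2.4, $T_\chi^\ast$ is a bijection from $\mathcal H_\chi$ onto $L^2(\mathbb R^n)$ inverse to $T_\chi$, so setting $\psi(\cdot,t):=T_\chi^\ast[\Psi(\cdot,t)]$ yields $\psi(\cdot,0)=T_\chi^\ast T_\chi[\psi_0]=\psi_0$ and $\psi(\cdot,t)\in D(\hat a)$ because $D(\hat A)=T_\chi[D(\hat a)]$. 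Commuting $T_\chi^\ast$ with $\partial_t$ and invoking (\ref{II-I2}) yields
\begin{equation*}
i\frac{\partial\psi}{\partial t}=i\,T_\chi^\ast\!\left[\frac{\partial\Psi}{\partial t}\right]=T_\chi^\ast[\hat A\Psi]=\hat a\,T_\chi^\ast[\Psi]=\hat a\psi,
\end{equation*}
closing the equivalence.

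The only point requiring any care, and thus the sole potential obstacle, is the justification of interchanging $\partial_t$ with the isometries $T_\chi$ and $T_\chi^\ast$ on the relevant domains. This is essentially automatic from the boundedness of these maps, but it must be stated explicitly so that the two displayed manipulations above are genuine equalities of strong derivatives in $L^2(\mathbb R^{2n})$ and $L^2(\mathbb R^n)$ respectively. Once this is granted, the whole proof is a one-line application of the intertwining relations already proved, and no further estimates or domain arguments are needed.
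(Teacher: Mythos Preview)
Your proposal is correct and follows essentially the same route as the paper: apply $T_\chi$ (resp.\ $T_\chi^\ast$) to the dynamical equation, invoke the intertwining relations (\ref{II-I1}) and (\ref{II-I2}), and use that the time derivative commutes with $T_\chi$ and $T_\chi^\ast$. Your justification of the latter via boundedness of the isometries is slightly more explicit than the paper's (which simply notes that $\chi$ is time-independent), but the argument is otherwise identical.
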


\begin{proof}
This theorem is also trivial. Again the equivalence of the two dynamics can
be proved by applying the maps $T_{\chi}$ and $T^*_{\chi}$ to the dynamical
equations and by taking into account the relations (\ref{II-I1},\ref{II-I2})
and also that the time derivative commutes with $T_{\chi}$ (because $\chi$
is time independent)
\begin{equation*}
\frac{\partial }{\partial t}T_{\chi}[\psi]=T_{\chi}[\frac{\partial }{%
\partial t}\psi]
\end{equation*}
and with $T^*_{\chi}$
\begin{equation*}
\frac{\partial }{\partial t}T^*_{\chi}[\Psi]=T^*_{\chi}[\frac{\partial }{%
\partial t}\Psi]
\end{equation*}
\end{proof}

\section{Mixed states in phase space quantum mechanics}

The formalism of the last section was defined in the spaces ${\mathcal{H}}%
_{\chi}$ and provides a description of pure states, only. However, one
realizes that most of the results can be generalized for the case where the
space of states is a Hilbert space larger than a particular $\mathcal{H}%
_\chi $ or is even the entire $L^2(\mathbb{R}^{2n})$.

Such generalizations lead to a suitable description of mixed states as we
now show. More precisely, we will discuss (some aspects of) the extension of
the formalism of the last section to the case where the Hilbert space of
states is of the form
\begin{equation*}
{\mathcal{H}}=\oplus _{k}{\mathcal{H}}_{\chi _{k}}
\end{equation*}%
for some orthogonal, finite set of functions $\chi _{k}\in L^{2}(\mathbb{R}%
^{n})$ such that
\begin{equation*}
\sum_{k}||\chi _{k}||^{2}=1.
\end{equation*}%
Then
\begin{equation*}
\Psi \in {\mathcal{H}}\Longrightarrow \Psi =\sum_{k}\psi _{k}\otimes \chi
_{k}^{\ast },\quad \psi _{k}\in L^{2}(\mathbb{R}^{n})
\end{equation*}%
The normalization of $\Psi $ follows directly from the normalization of each
$\psi _{k}$
\begin{equation*}
|||\Psi |||^{2}=((\Psi \,|\,\Psi ))=\sum_{k}(\psi _{k}\,|\,\psi _{k})(\chi
_{k}\,|\,\chi _{k})=\sum_{k}||\psi _{k}||^{2}||\chi _{k}||^{2}
\end{equation*}%
and this formula suggests that the square of the norm of $\chi _{k}$ can be
interpreted as the classical probability associated with the $k$-component
of the mixed state $\Psi $.

The ${\mathcal{H}}$-representation of a configuration space operator $\hat{a}
$ generalizes Definition 2.7. We now have
\begin{equation*}
\hat{A}:D(\hat{A})\subset {\mathcal{H}}\longrightarrow {\mathcal{H}}
\end{equation*}%
where the domain is
\begin{equation*}
D(\hat{A})=\oplus _{k}T_{\chi _{k}}[D(\hat{a})]
\end{equation*}%
and%
\begin{equation*}
\hat{A}\Psi =\sum_{k}T_{\chi _{k}}\hat{a}T_{\chi _{k}}^{\ast }\,\Psi \quad
,\quad \forall \Psi \in D(\hat{A}).
\end{equation*}%
Let us now recover the quantum predictions for mixed states from applying
the standard rules of "pure state quantum mechanics" to the phase space ${%
\mathcal{H}}$-formulation. Let $\phi _{\alpha }$ be a normalized
eigenfunction of $\hat{a}$ associated with the non-degenerate eigenvalue $%
\alpha $ (to make it simpler we shall assume the one-dimensional case; $n=1$%
). Then $\alpha $ is also an eigenvalue of $\hat{A}$ (in this case,
degenerate). A normalized basis of the $\alpha $-eigenspace of $\hat{A}$ is
given by the eigenfunctions
\begin{equation*}
\Phi _{\alpha ,k}=\phi _{\alpha }\otimes \frac{\chi _{k}^{\ast }}{||\chi
_{k}||},
\end{equation*}%
hence the probability associated with the eigenvalue $\alpha $ is (from
standard rules)
\begin{equation*}
{\mathcal{P}}(\hat{A}=\alpha )=\sum_{k}\left\vert ((\Psi \,|\,\Phi _{\alpha
,k}))\right\vert ^{2}=\sum_{k}\left\vert (\psi _{k}\,|\,\phi _{\alpha
})\right\vert ^{2}||\chi _{k}||^{2}
\end{equation*}%
which is then a convex combination of the probabilities ${\mathcal{P}}(\hat{a%
}=\alpha )$ calculated for each of the components $\psi _{k}$ of the mixed
state $\Psi $. This result thus re-enforces the interpretation of the square
of the norm of $\chi _{k}$ as the classical probability associated with the $%
k$-component of the mixed state.

The projection of $\Psi $ into the $\alpha $-eigenspace of $\hat{A}$ yields
the state
\begin{equation*}
\Upsilon =\sum_{k}((\Psi \,|\,\Phi _{\alpha ,k}))\Phi _{\alpha
,k}=\sum_{k}(\psi _{k}\,|\,\phi _{\alpha })\phi _{\alpha }\otimes \chi
_{k}^{\ast }
\end{equation*}%
and so the collapse of the wave function (produced by a measurement of $\hat{%
A}$ with output $\alpha $) yields
\begin{equation*}
\Upsilon _{c}=\frac{\Upsilon }{|||\Upsilon |||}.
\end{equation*}%
Just like for standard, pure state, quantum mechanics the probability ${%
\mathcal{P}}(\hat{A}=\alpha )$ is identical to the transition probability $%
|((\Psi \,|\,\Upsilon _{c}))|^{2}$. Indeed
\begin{equation*}
|((\Psi \,|\,\Upsilon _{c}))|^{2}=\frac{1}{|||\Upsilon |||^{2}}%
|((\,\sum_{k}\psi _{k}\otimes \chi _{k}^{\ast }\,|\,\sum_{j}(\psi
_{j}\,|\,\phi _{\alpha })\phi _{\alpha }\otimes \chi _{j}^{\ast }\,))|^{2}
\end{equation*}%
and since
\begin{equation*}
|||\Upsilon |||^{2}=\sum_{k}|(\psi _{k}\,|\,\phi _{a})|^{2}||\chi _{k}||^{2}
\end{equation*}%
we get
\begin{equation*}
|((\Psi \,|\,\Upsilon _{c}))|^{2}=\frac{1}{|||\Upsilon |||^{2}}\left\vert
\sum_{k}|(\psi _{k}\,|\,\phi _{a})|^{2}||\chi _{k}||^{2}\right\vert
^{2}=\sum_{k}|(\psi _{k}\,|\,\phi _{a})|^{2}||\chi _{k}||^{2}.
\end{equation*}%
Hence, a representation of mixed states in terms of standard wave functions
with support on the phase space is possible. We intend to study this topic
in more detail in a forthcoming paper.

\section{Phase space Weyl calculus}

We start this section with a brief review of the standard definitions and
properties of Weyl operators acting on functions with support on the
configuration space (for details and proofs the interested reader may refer
to \cite{Birk,birkbis,Hor2,Shubin,Stein,Wong}). We then present the
extension of these operators to phase space functions and study the main
properties of the resulting phase space Weyl calculus.

\subsection{Standard Weyl calculus}

Let $\mathscr L(\mathcal{S}(\mathbb{R}^{n}),\mathcal{S}^{\prime }(\mathbb{R}%
^{n}))$ be the space of linear and continuous operators of the form $%
\mathcal{S}(\mathbb{R}^{n})\longrightarrow \mathcal{S}^{\prime }(\mathbb{R}%
^{n})$. In view of Schwartz kernel theorem all operators $\hat{a}\in
\mathscr
L(\mathcal{S}(\mathbb{R}^{n}),S^{\prime }(\mathbb{R}^{n}))$ admit a kernel
representation
\begin{equation}
\hat{a}\psi (x)=\langle K_{a}(x,\cdot ),\psi (\cdot )\rangle  \label{IV1}
\end{equation}%
where $K_{a}\in \mathcal{S}^{\prime }(\mathbb{R}^{n}\times \mathbb{R}^{n})$.
The Weyl symbol of $\hat{a}$ is then
\begin{equation}
a(x,\xi _{x})=\int_{\mathbb{R}^{n}}e^{-i\xi _{x}\cdot y}K_{a}(x+\frac{1}{2}%
y,x-\frac{1}{2}y)\,dy  \label{IV2}
\end{equation}%
and, conversely,%
\begin{equation}
K_{a}(x,y)=\left( \frac{1}{2\pi }\right) ^{n}\int_{\mathbb{R}^{n}}e^{i\xi
\cdot (x-y)}a(\frac{x+y}{2},\xi )\,d\xi  \label{IV3}
\end{equation}%
where the integrals are interpreted as generalized Fourier (and inverse
Fourier) transforms (i.e. in the sense of distributions). The inverse
formula can be re-written in the form
\begin{equation}
K_{a}(x,y)=\left( \frac{1}{2\pi }\right) ^{n}\int_{\mathbb{R}^{n}}e^{\frac{i%
}{2}\xi \cdot (x+y)}\mathcal{F}_{\sigma }a(x-y,\xi )\,d\xi  \label{IV4}
\end{equation}%
where $\mathcal{F}_{\sigma }$ is the symplectic Fourier transform (\ref{ISF}%
).

An important property of Weyl operators is that, if $\hat b \in \mathscr L(%
\mathcal{S}(\mathbb{R}^n),\mathcal{S}(\mathbb{R}^n))$ then $\hat c=\hat a
\hat b \in \mathscr L(\mathcal{S}(\mathbb{R}^n),\mathcal{S}^{\prime }(%
\mathbb{R}^n))$ and its Weyl symbol is given by
\begin{equation}
c(z_x)=\left(\frac{1}{4\pi}\right)^{2n}\int \int_{\mathbb{R}^{2n} \times
\mathbb{R}^{2n}} e^{\frac{i}{2} \sigma(u,v)} a(z_x+\frac{1}{2}u)b(z_x-\frac{1%
}{2}v) \, du\, dv=a(z_x) \star b(z_x)  \label{IVMSP}
\end{equation}
where $\star$ is the "twisted product" or Moyal product familiar from
deformation quantization \cite{BFFLS1,Maillard,Wong}.

The Weyl correspondence $a\overset{\text{Weyl}}{\longleftrightarrow }\hat{a}%
^{W}$ (given by eqs.(\ref{IV1},\ref{IV2})) can be written more
straightforwardly as
\begin{equation}
\hat{a}^{W}=\left( \frac{1}{2\pi }\right) ^{n}\int_{\mathbb{R}^{2n}}\mathcal{%
F}_{\sigma }a(z_{0})\hat{T}(z_{0})\,dz_{0}  \label{IVaw}
\end{equation}%
where the integral is an operator valued (Bochner) integral and $\hat{T}%
(z_{0}):L^{2}(\mathbb{R}^{n})\longrightarrow L^{2}(\mathbb{R}^{n})$ is the
Heisenberg-Weyl operator ($z_{0}=(x_{0},\xi _{x0})$)
\begin{equation}
\hat{T}(z_{0})=e^{-i\sigma (\hat{z}_{x},z_{0})}=e^{-i(x_{0}\cdot \hat{\xi}%
_{x}-\xi _{x0}\cdot \hat{x})};  \label{IVHW1}
\end{equation}%
explicitly:
\begin{equation}
\hat{T}(x_{0},\xi _{x0})\psi (x)=e^{i(\xi _{x0}\cdot x-\frac{1}{2}\xi
_{x0}\cdot x_{0})}\psi (x-x_{0}).  \label{IVHW2}
\end{equation}

\begin{remark}
The Heisenberg-Weyl operators $\hat{T}(z_{0})$ are unitary operators
associated with the self-adjoint Hamiltonians $\sigma (\hat{z}_{x},z_{0})$.
In fact, one can easily check that
\begin{equation*}
\psi (x,t)=\hat{U}(z_{0},t)\psi _{0}(x)=e^{it(\xi _{x0}\cdot x-\frac{1}{2}%
\xi _{x0}\cdot x_{0})}\psi _{0}(x-tx_{0}),\quad \psi _{0}\in \mathcal{S}(%
\mathbb{R}^{n})
\end{equation*}%
is the unique solution of the initial value problem
\begin{equation*}
i\frac{\partial }{\partial t}\psi =\sigma (\hat{z}_{x},z_{0})\psi \quad
,\quad \psi (\cdot ,0)=\psi _{0}\in \mathcal{S}(\mathbb{R}^{n})
\end{equation*}%
Hence $\hat{U}(z_{0},t)=e^{-it\sigma (\hat{z}_{x},z_{0})}$ in $\mathcal{S}(%
\mathbb{R}^{n})$ and since both operators are continuous we also have
\begin{equation*}
\hat{U}(z_{0},t)=e^{-it\sigma (\hat{z}_{x},z_{0})}\quad \mbox{in}\,L^{2}(%
\mathbb{R}^{n}).
\end{equation*}%
Finally, it is trivial to check that $\hat{T}(z_{0})=\hat{U}(z_{0},1)$.
\end{remark}

The Weyl correspondence satisfies the metaplectic covariance property, which
will be useful in section 5,

\begin{theorem}
Let $\limfunc{Mp}(2n,\sigma )$ be the metaplectic group, i.e. the unitary
representation of the double cover of $Sp(2n,\sigma )$. Let $\hat{a}^{W}%
\overset{\text{Weyl}}{\longleftrightarrow }a$, let $S\in $ $Sp(2n,\sigma )$
and let $\hat{S}\in $ $\limfunc{Mp}(2n,\sigma )$ be (one of the two)
metaplectic operators that projects onto $S$. Then
\begin{equation}
\hat{S}^{-1}\hat{a}^{W}\hat{S}\overset{\text{Weyl}}{\longleftrightarrow }%
a\circ S  \label{IVm}
\end{equation}
\end{theorem}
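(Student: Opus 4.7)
The plan is to work directly with the Bochner integral representation (\ref{IVaw}) of the Weyl operator and exploit two facts: the metaplectic representation intertwines the Heisenberg-Weyl operators with the symplectic action, and the symplectic Fourier transform $\mathcal{F}_\sigma$ is symplectically invariant. First I would conjugate (\ref{IVaw}) by $\hat{S}$ and, since conjugation is continuous on $L^2(\mathbb{R}^n)$ and commutes with the Bochner integral, bring it inside:
\begin{equation*}
\hat{S}^{-1}\hat{a}^{W}\hat{S}=\left(\frac{1}{2\pi}\right)^{n}\int_{\mathbb{R}^{2n}}\mathcal{F}_{\sigma}a(z_{0})\,\hat{S}^{-1}\hat{T}(z_{0})\hat{S}\,dz_{0}.
\end{equation*}

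The key step is the intertwining relation
\begin{equation*}
\hat{S}^{-1}\hat{T}(z_{0})\hat{S}=\hat{T}(S^{-1}z_{0}),
\end{equation*}
which is the defining feature of the metaplectic representation: $\hat S$ is constructed precisely so that $\hat{S}^{-1}\hat{z}_x\hat{S}=S^{-1}\hat{z}_x$ on the generators $\hat x,\hat\xi_x$, and because $\hat{T}(z_0)=\exp(-i\sigma(\hat z_x,z_0))$ with $\sigma$ symplectically invariant, exponentiating gives the stated conjugation. I would either cite this from the references already invoked in the paper (e.g.\ \cite{Birk,birkbis}) or, if a self-contained argument is preferred, verify it first on the one-parameter subgroups generated by the quadratic Hamiltonians whose flows generate $Sp(2n,\sigma)$, using Stone's theorem together with the canonical commutation relations.

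Next I would perform the change of variables $z_0 \mapsto Sz_0$. Since $S$ is symplectic, $|\det S|=1$, so $dz_0$ is preserved, and the integral becomes
\begin{equation*}
\hat{S}^{-1}\hat{a}^{W}\hat{S}=\left(\frac{1}{2\pi}\right)^{n}\int_{\mathbb{R}^{2n}}\mathcal{F}_{\sigma}a(Sz_{0})\,\hat{T}(z_{0})\,dz_{0}.
\end{equation*}
To recognize the right-hand side as a Weyl operator, I would invoke the symplectic invariance of $\mathcal{F}_\sigma$, namely $\mathcal{F}_\sigma a \circ S = \mathcal{F}_\sigma (a\circ S)$, which follows from the change of variables $z\mapsto Sz$ in the definition (\ref{ISF}) together with $\sigma(Sz,Sz')=\sigma(z,z')$. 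Comparing with (\ref{IVaw}) then identifies the result as $\widehat{a\circ S}^{W}$, which is (\ref{IVm}).

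The main obstacle is the intertwining identity for $\hat T$ under metaplectic conjugation; everything else reduces to manipulations of Bochner integrals and symplectic changes of variables. On dense Schwartz data these manipulations are unambiguous, and both $\hat S^{-1}\hat a^W \hat S$ and $\widehat{a\circ S}^W$ are continuous maps $\mathcal{S}(\mathbb{R}^n)\to\mathcal{S}'(\mathbb{R}^n)$, so the identity extends by density to the full distributional setting.
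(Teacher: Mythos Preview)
The paper does not actually prove this theorem: immediately after the statement it writes ``For proof see \cite{Stein,Wong}.'' Your argument is precisely the standard proof one finds in those references (and in \cite{Birk,birkbis}), so there is nothing to compare; your outline is correct and complete.

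One small slip worth fixing: in justifying the intertwining relation you write $\hat{S}^{-1}\hat{z}_x\hat{S}=S^{-1}\hat{z}_x$, but with the paper's conventions (visible for instance in the proof of Theorem~5.5, where $U\hat X U^{-1}=\hat X-\tfrac12\hat\Xi_p$ corresponds to the first line of the symplectic map $S$ in (\ref{I3}), with $U=\hat S^{-1}$) the correct relation on generators is $\hat{S}^{-1}\hat{z}_x\hat{S}=S\hat{z}_x$. Exponentiating and using $\sigma(S\,\cdot\,,\,\cdot\,)=\sigma(\,\cdot\,,S^{-1}\,\cdot\,)$ then gives exactly the identity $\hat{S}^{-1}\hat{T}(z_0)\hat{S}=\hat{T}(S^{-1}z_0)$ that you state and use, so the rest of your computation is unaffected.
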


For proof see \cite{Stein,Wong}.

\subsection{Weyl calculus in phase space}

In this section we consider the extensions of $\hat a^W$ to $\mathcal{H}%
_\chi $ and $\mathcal{S}(\mathbb{R}^{2n})$. We will focus on the case where $%
\chi \in \mathcal{S}(\mathbb{R}^n)$ so that $T_\chi[\psi] \in \mathcal{S}(%
\mathbb{R}^{2n})$ for all $\psi \in \mathcal{S}(\mathbb{R}^{n})$.

Let us then introduce the notation
\begin{equation*}
\mathcal{S}_\chi=\mathcal{S}(\mathbb{R}^{2n}) \cap \mathcal{H}_{\chi}=T_\chi[%
\mathcal{S}(\mathbb{R}^n)]
\end{equation*}
and consider the extension of $T_\chi$ to $\mathcal{S}^{\prime }(\mathbb{R}%
^n)$
\begin{equation*}
T_{\chi}:\mathcal{S}^{\prime }(\mathbb{R}^n) \longrightarrow \mathcal{S}%
^{\prime }(\mathbb{R}^{2n}); \psi \longrightarrow
T_{\chi}[\psi]=\psi\otimes\chi^*
\end{equation*}
Then $\Psi=T_\chi[\psi]$ is the distribution
\begin{equation*}
\langle \Psi,t\rangle =\langle \psi,T^*_{\chi^*}[t]\rangle ,\quad \forall t
\in \mathcal{S}(\mathbb{R}^{2n})
\end{equation*}
which is well defined because $\chi \in \mathcal{S}(\mathbb{R}^n)$.

By a natural generalization of Definition 2.7 (to operators of the form $%
\hat a:\mathcal{S}(\mathbb{R}^n) \to \mathcal{S}^{\prime }(\mathbb{R}^{n})$)
the extension of $\hat a^W$ to $\mathcal{S}_{\chi}\subset \mathcal{S}(%
\mathbb{R}^{2n})$ is given by
\begin{equation*}
\hat A^W_\chi: \mathcal{S}_{\chi} \longrightarrow \mathcal{S}^{\prime }(%
\mathbb{R}^{2n})
\end{equation*}
\begin{equation*}
\hat A^W_\chi=T_{\chi} \hat a^W T_{\chi}^*=\left(\frac{1}{2\pi}\right)^n
\int_{\mathbb{R}^{2n}} \mathcal{F}_{\sigma}a(z_0) T_{\chi} \hat T(z_0)
T_{\chi}^* \, dz_0
\end{equation*}
and its action is explicitly
\begin{equation*}
\hat A^W_\chi \Psi(x,p)=\left(\frac{1}{2\pi}\right)^n \langle \mathcal{F}%
_{\sigma}a(\cdot),T_{\chi} \hat T(\cdot) T_{\chi}^*\Psi(x,p)\rangle
\end{equation*}
which is well defined since
\begin{equation*}
T_{\chi} \hat T(x_0,\xi_{x0}) T_{\chi}^*\Psi(x,p)=e^{i(\xi_{x0}\cdot x-\frac{%
1}{2}\xi_{x0} \cdot x_0)}\Psi(x-x_0,p)
\end{equation*}
belongs to $\mathcal{S}(\mathbb{R}^{2n})$ for all $(x,p)$.

Now consider the following

\begin{definition}
Let the \textit{phase space Heisenberg-Weyl operator} be defined by
\begin{equation*}
\hat{T}_{PS}(z_{0}):L^{2}(\mathbb{R}^{2n})\longrightarrow L^{2}(\mathbb{R}%
^{2n})
\end{equation*}%
\begin{equation}
\hat{T}_{PS}(z_{0})\Psi (x,p):=e^{i(\xi _{x0}\cdot x-\frac{1}{2}\xi
_{x0}\cdot x_{0})}\Psi (x-x_{0},p).  \label{IVPSHW}
\end{equation}
\end{definition}

Notice that

\begin{remark}
From Remark 4.1 it is trivial to conclude that
\begin{equation*}
\hat{T}_{PS}(z_{0})=\left. \hat{U}_{PS}(z_{0},t)\right\vert _{t=1}
\end{equation*}%
where ($\hat{Z}_{x}=(\hat{X},\hat{\Xi}_{x})$, cf. Remark 2.8)
\begin{equation*}
\hat{U}_{PS}(z_{0},t)=e^{-it\sigma (\hat{Z}_{x},z_{0})}
\end{equation*}%
is the one-parameter unitary evolution group with infinitesimal generator $%
\sigma (\hat{Z}_{x},z_{0})$. Moreover, from the definition of $\hat{T}%
_{PS}(z_{0})$ we also realize that
\begin{equation*}
\left. \hat{T}_{PS}(z_{0})\right\vert _{\mathcal{S}_{\chi }}=\left. T_{\chi }%
\hat{T}(z_{0})T_{\chi }^{\ast }\right\vert _{\mathcal{S}_{\chi }}.
\end{equation*}
\end{remark}

It follows that $\hat{A}_{\chi }^{W}$ can be written as
\begin{equation}
\hat{A}_{\chi }^{W}=\left( \frac{1}{2\pi }\right) ^{n}\int_{\mathbb{R}^{2n}}%
\mathcal{F}_{\sigma }a(z_{0})\hat{T}_{PS}(z_{0})\,dz_{0}.  \label{IVAW1}
\end{equation}%
Now note that the functional form of the previous operator is independent of
$\chi $ and that its action can be consistently extended to $\mathcal{S}(%
\mathbb{R}^{2n})$. This naturally suggests

\begin{definition}
Let the \textit{phase space Weyl operator} $\hat A^W:\mathcal{S}(\mathbb{R}%
^{2n}) \to \mathcal{S}^{\prime }(\mathbb{R}^{2n})$, associated with the Weyl
symbol $a(x, \xi_x)$, be defined by
\begin{equation}
\hat A^W=\left(\frac{1}{2\pi}\right)^n \int_{\mathbb{R}^{2n}} \mathcal{F}%
_{\sigma}a(z_0) \hat T_{PS}(z_0) \, dz_0  \label{IVAW2}
\end{equation}
and let us denote by $a \overset{\text{Weyl}}{\longleftrightarrow} \hat A^W$
(or, more precisely, $A \overset{\text{Weyl}}{\longleftrightarrow} \hat A^W$%
; see the next Theorem) the \textit{phase space Weyl correspondence} between
$a$ (or $A$) and $\hat A^W$.
\end{definition}

We now prove that $\hat A^W$ is indeed a Weyl operator, whose restriction to
$\mathcal{S}_\chi$ satisfies $\hat A^W |_{\mathcal{S}_\chi}=\hat A^W_\chi$
for all $\chi \in \mathcal{S}(\mathbb{R}^n)$.

\begin{theorem}
Let $a(x, \xi_x) \in \mathcal{S}^{\prime }(\mathbb{R}^{2n})$ be the Weyl
symbol of $\hat a^W$. Then the operator $\hat A^W: \mathcal{S}(\mathbb{R}%
^{2n}) \longrightarrow \mathcal{S}^{\prime }(\mathbb{R}^{2n})$ defined by
\begin{equation}
\hat A^W \Psi=\left(\frac{1}{2\pi}\right)^n \langle \mathcal{F}%
_{\sigma}a(\cdot) ,\hat T_{PS}(\cdot)\Psi\rangle  \label{IVAW3}
\end{equation}
that is, formally, by eq.(\ref{IVAW2}), is a Weyl operator with symbol $A=a
\otimes 1$, in coordinates
\begin{equation}
A(x,p,\xi_x,\xi_p)=a(x,\xi_x)  \label{IVAa}
\end{equation}
and so $A \in \mathcal{S}^{\prime }(\mathbb{R}^{2n}\oplus \mathbb{R}^{2n})$.
\end{theorem}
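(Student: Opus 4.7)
The plan is to verify that the formula defining $\hat A^W$ coincides with what one gets from the standard Weyl quantization on $\mathbb{R}^{4n}$ applied to the symbol $A = a\otimes 1$. The route has two ingredients: recognizing $\hat T_{PS}(z_0)$ as a restriction of the full $4n$-dimensional Heisenberg--Weyl operator, and computing the symplectic Fourier transform of the "trivially extended" symbol.

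First I would write down the Heisenberg--Weyl operator on $L^2(\mathbb{R}^{2n})$ associated with the doubled symplectic form $\sigma_x\oplus\sigma_p$. For $Z_0=(x_0,p_0,\xi_{x0},\xi_{p0})$ it acts by
\begin{equation*}
\hat{\mathcal T}(Z_0)\Psi(x,p) = e^{i(\xi_{x0}\cdot x + \xi_{p0}\cdot p - \tfrac12(\xi_{x0}\cdot x_0+\xi_{p0}\cdot p_0))} \Psi(x-x_0,p-p_0),
\end{equation*}
in analogy with (\ref{IVHW2}) but with $n$ replaced by $2n$. Setting $p_0=0$ and $\xi_{p0}=0$ collapses this to exactly the phase space Heisenberg--Weyl operator of Definition 4.2, so that $\hat T_{PS}(z_0) = \hat{\mathcal T}(z_0,0,0)$ (with an obvious reordering of coordinates). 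This is the algebraic heart of the identification.

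Next I would handle the symbol side. Applying the $4n$-dimensional Weyl quantization rule (the analog of (\ref{IVaw})) to $A(x,p,\xi_x,\xi_p)=a(x,\xi_x)$ requires the symplectic Fourier transform $\mathcal F_{\sigma_x\oplus\sigma_p}$ of $A=a\otimes 1$. Because the symplectic Fourier transform factorizes over the direct sum of symplectic subspaces, and because $\mathcal F_\sigma[1](p_0,\xi_{p0}) = (2\pi)^n\delta(p_0)\delta(\xi_{p0})$ in the distributional sense, I would obtain
\begin{equation*}
\mathcal F_{\sigma_x\oplus\sigma_p}[A](x_0,p_0,\xi_{x0},\xi_{p0}) = (2\pi)^n\,\mathcal F_\sigma a(x_0,\xi_{x0})\,\delta(p_0)\delta(\xi_{p0}).
\end{equation*}
Inserting this into
\begin{equation*}
\hat A^W_{4n} = \Bigl(\frac{1}{2\pi}\Bigr)^{\!2n}\!\int_{\mathbb{R}^{4n}} \mathcal F_{\sigma_x\oplus\sigma_p}[A](Z_0)\,\hat{\mathcal T}(Z_0)\,dZ_0
\end{equation*}
and integrating out the two delta factors against $\hat{\mathcal T}$ collapses the integral back to $\bigl(\tfrac{1}{2\pi}\bigr)^n\!\int \mathcal F_\sigma a(z_0)\hat T_{PS}(z_0)\,dz_0$, which is precisely (\ref{IVAW2}). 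Equivalently, one can recognize the kernel of $\hat A^W$ from (\ref{IVAW3}) and compute its Weyl symbol via (\ref{IV2}) applied in $\mathbb{R}^{4n}$; the $(p,\xi_p)$-integrations produce $\delta$-functions that force the symbol to be independent of $\xi_p$ and to agree with $a$ in the $(x,\xi_x)$ variables.

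Finally, the agreement $\hat A^W|_{\mathcal S_\chi} = \hat A^W_\chi$ is immediate from the last identity of Remark 4.3, since $\hat T_{PS}(z_0)$ and $T_\chi \hat T(z_0)T_\chi^*$ coincide on $\mathcal S_\chi$ and the Bochner integral defining the two operators is the same. The main obstacle I anticipate is not conceptual but technical: the manipulations above are distributional (the symbol $a\otimes 1$ is only tempered, $\mathcal F_\sigma[1]$ is a delta, and the integrals are interpreted as operator-valued pairings against test functions in $\mathcal S(\mathbb{R}^{2n})$). To make this rigorous I would test $\hat A^W\Psi$ against an arbitrary $\Phi\in\mathcal S(\mathbb{R}^{2n})$, swap integrals by Fubini (justified since $\hat T_{PS}$ is unitary and the pairing $\langle \mathcal F_\sigma a,\cdot\rangle$ is a continuous $\mathcal S'$-$\mathcal S$ duality), and verify the identification at the level of the distributional kernel, which is the cleanest way to dispose of the delta manipulations.
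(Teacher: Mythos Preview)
Your argument is correct but proceeds in the reverse direction from the paper. You start from the claimed symbol $A=a\otimes 1$, apply the $4n$-dimensional Weyl quantization formula, and use the factorization $\mathcal F_{\sigma_x\oplus\sigma_p}[a\otimes 1]=(2\pi)^n\,\mathcal F_\sigma a\otimes\delta$ together with the identification $\hat T_{PS}(z_0)=\hat{\mathcal T}(z_0,0)$ to collapse the Bochner integral back to the definition of $\hat A^W$. The paper instead works bottom-up: it expands $\hat A^W\Psi(x,p)$ directly from the defining pairing, performs the change of variable $x'=x-x_0$, reads off the distributional kernel as $K_A(x,p;x',p')=K_a(x;x')\,\delta(p-p')$ by comparison with (\ref{IV4}), and then computes the symbol from this kernel via the $4n$-dimensional analog of (\ref{IV2}). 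Your route makes the tensor structure $a\otimes 1$ conceptually transparent and would generalize immediately to any direct-sum symplectic space; the paper's kernel computation is more pedestrian but avoids invoking the full $4n$-dimensional Heisenberg--Weyl machinery and keeps all manipulations at the level of explicit integral formulas, so the distributional bookkeeping you rightly flag as the main obstacle is slightly lighter there. (Your closing remark on $\hat A^W|_{\mathcal S_\chi}=\hat A^W_\chi$ is correct but is the content of the \emph{next} theorem in the paper, not part of this one.)
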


\begin{proof}
Let $\Psi \in \mathcal{S}(\mathbb{R}^{2n})$. Since $\hat{T}_{PS}\Psi (z)\in
\mathcal{S}(\mathbb{R}^{2n})$ for all $z$ and $\mathcal{F}_{\sigma }a\in
\mathcal{S}^{\prime }(\mathbb{R}^{2n})$ the operator $\hat{A}^{W}$, given by
eq.(\ref{IVAW3}), is well-defined. We then have
\begin{eqnarray*}
\hat{A}^{W}\Psi (x,p) &=&\left( \frac{1}{2\pi }\right) ^{n}\int_{\mathbb{R}%
^{n}\times \mathbb{R}^{n}}\mathcal{F}_{\sigma }a(x_{0},\xi _{x0})e^{i(\xi
_{x0}\cdot x-\frac{1}{2}\xi _{x0}\cdot x_{0})}\Psi (x-x_{0},p)\,dx_{0}d\xi
_{x0} \\
&=&\left( \frac{1}{2\pi }\right) ^{n}\int_{\mathbb{R}^{n}\times \mathbb{R}%
^{n}}\mathcal{F}_{\sigma }a(x-x^{\prime },\xi _{x0})e^{\frac{i}{2}(\xi
_{x0}\cdot x+\xi _{x0}\cdot x^{\prime })}\Psi (x^{\prime },p)\,dx^{\prime
}d\xi _{x0}
\end{eqnarray*}%
where we performed the substitution $x^{\prime }=x-x_{0}$. Hence, the action
of $\hat{A}^{W}$ can be written
\begin{equation*}
\hat{A}^{W}\Psi (x,p)=\langle K_{A}(x,p;x^{\prime },p^{\prime }),\Psi
(x^{\prime },p^{\prime })\rangle
\end{equation*}%
for the kernel $K_{A}\in \mathcal{S}^{\prime }(\mathbb{R}^{2n}\times \mathbb{%
R}^{2n})$ given by
\begin{equation*}
K_{A}(x,p;x^{\prime },p^{\prime })=\left( \frac{1}{2\pi }\right) ^{n}\delta
(p-p^{\prime })\int_{\mathbb{R}^{n}}\mathcal{F}_{\sigma }a(x-x^{\prime },\xi
_{x0})e^{\frac{i}{2}(\xi _{x0}\cdot x+\xi _{x0}\cdot x^{\prime })}\,d\xi
_{x0}
\end{equation*}%
where the integral is interpreted in the distributional sense. Comparing
this expression with eq.(\ref{IV4}) we find that
\begin{equation*}
K_{A}(x,p;x^{\prime },p^{\prime })=K_{a}(x;x^{\prime })\delta (p-p^{\prime })
\end{equation*}%
where $K_{a}\in \mathcal{S}^{\prime }(\mathbb{R}^{n}\times \mathbb{R}^{n})$
is the kernel of the Weyl operator $\hat{a}^{W}$.

From eq.(\ref{IV2}) it follows that the Weyl symbol of $\hat A^W$ is
\begin{equation*}
A(x,p,\xi_x,\xi_p)=\int_{\mathbb{R}^{n}\times \mathbb{R}^n} K_A(x+\frac{1}{2}%
\eta_x,p+\frac{1}{2}\eta_p;x-\frac{1}{2}\eta_x,p-\frac{1}{2}\eta_p)
e^{-i(\xi_x \cdot \eta_x+\xi_p \cdot \eta_p)} \, d\eta_x d\eta_p
\end{equation*}
\begin{equation*}
=\int_{\mathbb{R}^{n}\times \mathbb{R}^n} K_a(x+\frac{1}{2}\eta_x;x-\frac{1}{%
2}\eta_x) \delta(\eta_p) e^{-i(\xi_x \cdot \eta_x+\xi_p \cdot \eta_p)} \,
d\eta_x d\eta_p
\end{equation*}
\begin{equation*}
=\int_{\mathbb{R}^{n}} K_a(x+\frac{1}{2}\eta_x;x-\frac{1}{2}\eta_x)
e^{-i\xi_x \cdot \eta_x} \, d\eta_x =a(x,\xi_x)
\end{equation*}
\end{proof}

\begin{theorem}
The operator $\hat{A}^{W}$ satisfies
\begin{equation}
\left. \hat{A}^{W}\right\vert _{\mathcal{S}_{\chi }}=\hat{A}_{\chi }^{W}
\label{IVRS}
\end{equation}%
and we have the intertwining relations
\begin{equation}
\hat{A}^{W}T_{\chi }=T_{\chi }\hat{a}^{W}  \label{IVIn1}
\end{equation}%
and
\begin{equation}
T_{\chi }^{\ast }\hat{A}^{W}=\hat{a}^{W}T_{\chi }^{\ast }  \label{IVIn2}
\end{equation}
\end{theorem}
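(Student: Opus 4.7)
The plan is to prove the three identities in a convenient order: establish the intertwining \eqref{IVIn1} first, deduce \eqref{IVRS} from it essentially for free, and then derive \eqref{IVIn2} by an analogous direct calculation (or, alternatively, by a formal adjoint argument).

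\textbf{Step 1 (intertwining \eqref{IVIn1}).} Take $\psi\in\mathcal{S}(\mathbb{R}^n)$ and apply $\hat{A}^W$ to $T_\chi[\psi]=\psi\otimes\chi^{\ast}$ via the defining formula \eqref{IVAW3}. The key computational observation is that $\hat{T}_{PS}(z_0)$ acts trivially on the $p$-variable, so on a tensor product it factors:
\begin{equation*}
\hat{T}_{PS}(x_0,\xi_{x0})\,[\psi\otimes\chi^{\ast}](x,p)
=e^{i(\xi_{x0}\cdot x-\tfrac12\xi_{x0}\cdot x_0)}\psi(x-x_0)\,\chi^{\ast}(p)
=\bigl[\hat{T}(x_0,\xi_{x0})\psi\bigr](x)\,\chi^{\ast}(p),
\end{equation*}
i.e.\ $\hat{T}_{PS}(z_0)T_\chi[\psi]=T_\chi[\hat{T}(z_0)\psi]$. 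Pulling the (fixed) factor $\chi^{\ast}$ out of the Bochner/distributional integral against $\mathcal{F}_\sigma a$ then yields $\hat{A}^W T_\chi[\psi]=T_\chi[\hat{a}^W\psi]$, which is \eqref{IVIn1}.

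\textbf{Step 2 (restriction identity \eqref{IVRS}).} Every $\Psi\in\mathcal{S}_\chi$ is of the form $T_\chi[\psi]$ with $\psi=T_\chi^{\ast}[\Psi]\in\mathcal{S}(\mathbb{R}^n)$, and the identity $T_\chi^{\ast}T_\chi=\mathrm{id}_{L^2(\mathbb{R}^n)}$ (used already in Corollary 2.5) extends to $\mathcal{S}(\mathbb{R}^n)$. Using \eqref{IVIn1},
\begin{equation*}
\hat{A}^W\Psi=\hat{A}^W T_\chi[\psi]=T_\chi\hat{a}^W\psi=T_\chi\hat{a}^W T_\chi^{\ast}\Psi=\hat{A}^W_\chi\Psi,
\end{equation*}
which is exactly \eqref{IVRS}.

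\textbf{Step 3 (intertwining \eqref{IVIn2}).} For $\Psi\in\mathcal{S}(\mathbb{R}^{2n})$, pair $\hat{A}^W\Psi$ against $\chi(p)$ and integrate in $p$. The same ``$p$-inertness'' of $\hat{T}_{PS}(z_0)$ that was used in Step 1 now gives
\begin{equation*}
\int_{\mathbb{R}^n}\hat{T}_{PS}(z_0)\Psi(x,p)\,\chi(p)\,dp
=e^{i(\xi_{x0}\cdot x-\tfrac12\xi_{x0}\cdot x_0)}\int_{\mathbb{R}^n}\Psi(x-x_0,p)\chi(p)\,dp
=\hat{T}(z_0)\bigl(T_\chi^{\ast}\Psi\bigr)(x),
\end{equation*}
so interchanging this $p$-integration with the $z_0$-integration in \eqref{IVAW2} produces $T_\chi^{\ast}\hat{A}^W\Psi=\hat{a}^W T_\chi^{\ast}\Psi$, which is \eqref{IVIn2}.

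\textbf{Main obstacle.} None of the three steps requires any new idea beyond the structural fact that $\hat{T}_{PS}(z_0)$ acts only in the $x$-variable; the only real care needed is justifying the interchanges of the Bochner/distributional integral $\int \mathcal{F}_\sigma a(z_0)\,\hat{T}_{PS}(z_0)\,dz_0$ with the tensoring by $\chi^{\ast}$ (Step 1) and with the $L^2$ pairing $\int(\cdot)\,\chi(p)\,dp$ (Step 3). Since $\chi\in\mathcal{S}(\mathbb{R}^n)$ and the pairing $\bigl\langle\mathcal{F}_\sigma a,\hat{T}_{PS}(\cdot)\Psi\bigr\rangle$ is already well defined by Definition 4.4 through test functions in $\mathcal{S}(\mathbb{R}^{2n})$, these interchanges reduce to standard continuity/Fubini arguments for tempered distributions acting on Schwartz functions, with no serious analytic subtlety.
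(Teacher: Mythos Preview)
Your proof is correct and follows essentially the same approach as the paper: the core ingredient in both is that $\hat{T}_{PS}(z_0)$ acts only in the $x$-variable, yielding the Heisenberg--Weyl intertwining relations $\hat{T}_{PS}(z_0)T_\chi=T_\chi\hat{T}(z_0)$ and $T_\chi^{\ast}\hat{T}_{PS}(z_0)=\hat{T}(z_0)T_\chi^{\ast}$, and your Step~3 for \eqref{IVIn2} is verbatim the paper's argument. The only cosmetic difference is the order in which \eqref{IVRS} and \eqref{IVIn1} are obtained: the paper reads off \eqref{IVRS} directly from the coincidence of the integral formulas \eqref{IVAW1} and \eqref{IVAW2} on $\mathcal{S}_\chi$ and then deduces \eqref{IVIn1} via $\hat{A}^W T_\chi\psi=\hat{A}^W_\chi T_\chi\psi=T_\chi\hat{a}^W T_\chi^{\ast}T_\chi\psi=T_\chi\hat{a}^W\psi$, whereas you reverse this and derive \eqref{IVRS} from \eqref{IVIn1}.
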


\begin{proof}
The identity $\left. \hat A^W \right|_{\mathcal{S}_\chi}=\hat A^W_\chi$ is a
direct consequence of eqs.(\ref{IVAW1},\ref{IVAW2}) and $\mathcal{S}_\chi
\subset \mathcal{S}(\mathbb{R}^{2n})$.

Let us prove the first intertwining relation. For every $\psi \in D(\hat
a^W)=\mathcal{S}(\mathbb{R}^n)$ we have $T_\chi[\psi] \in \mathcal{S}_\chi$
and so
\begin{equation*}
\hat A^W T_\chi \psi = \hat A^W_\chi T_\chi \psi= T_\chi \hat a^W T_\chi^*
T_\chi \psi=T_\chi \hat a^W \psi
\end{equation*}
where we used the identity $T_\chi^* T_\chi=1$ in $\mathcal{S}(\mathbb{R}^n)$%
.

The second intertwining relation follows from
\begin{equation}
T_{\chi }^{\ast }\hat{T}_{PS}(z_{0})=\hat{T}(z_{0})T_{\chi }^{\ast }
\label{IVInT}
\end{equation}%
by interchanging the (Bochner) integrals (eqs.(\ref{IVaw}) and (\ref{IVAW2}%
)) with the operator $T_{\chi }^{\ast }$. The proof of eq.(\ref{IVInT}) in $%
\mathcal{S}(\mathbb{R}^{2n})$ is straightforward
\begin{eqnarray*}
T_{\chi }^{\ast }\hat{T}_{PS}(z_{0})\Psi (x,p) &=&T_{\chi }^{\ast }e^{i(\xi
_{x0}\cdot x-\frac{1}{2}\xi _{x0}\cdot x_{0})}\Psi (x-x_{0},p) \\
&=&\int \,e^{i(\xi _{x0}\cdot x-\frac{1}{2}\xi _{x0}\cdot x_{0})}\Psi
(x-x_{0},p)\chi (p)\,dp \\
&=&\hat{T}(z_{0})\int \,\Psi (x,p)\chi (p)\,dp=\hat{T}(z_{0})T_{\chi }^{\ast
}\Psi (x,p).
\end{eqnarray*}
\end{proof}

The spectral results for the operators $\hat A^W$ follow from the ones for $%
\hat a^W$ by a direct application of the previous Theorem.

\begin{corollary}
Let $\hat a^W$ and $\hat A^W$ be the Weyl and phase space Weyl operators
associated with the symbol $a\in \mathcal{S}(\mathbb{R}^{2n})$,
respectively. Then

(i) The eigenvalues of $\hat a^W$ and $\hat A^W$ are the same.

(ii) Let $\chi \in \mathcal{S}(\mathbb{R}^n)$. If $\psi_\lambda$ is an
eigenfunction of $\hat a^W$ then $\Psi_\lambda=T_\chi[\psi_\lambda]$ is an
eigenfunction of $\hat A^W$ (associated with the same eigenvalue).

(iii) Conversely, let $\Psi_\lambda$ be an eigenfunction of $\hat A^W$. If $%
\psi_\lambda=T_\chi^*[\Psi_\lambda]\not=0$ then $\psi_\lambda$ is an
eigenfunction of $\hat a^W$ (associated with the same eigenvalue).
\end{corollary}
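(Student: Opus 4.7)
The plan is to deduce all three assertions from the intertwining relations established in the preceding theorem, namely $\hat{A}^{W}T_{\chi }=T_{\chi }\hat{a}^{W}$ and $T_{\chi }^{\ast }\hat{A}^{W}=\hat{a}^{W}T_{\chi }^{\ast }$, together with the isometry (and hence injectivity) of $T_\chi$ and the kernel representation $K_A(x,p;x',p')=K_a(x,x')\delta(p-p')$ that was recorded in the proof of the previous theorem.

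For (ii), I would apply $T_{\chi }$ to both sides of $\hat{a}^{W}\psi _{\lambda }=\lambda \psi _{\lambda }$; by the first intertwining relation the left-hand side becomes $\hat{A}^{W}T_{\chi }[\psi _{\lambda }]=\hat{A}^{W}\Psi _{\lambda }$, while the right-hand side is $\lambda \Psi _{\lambda }$. Since $T_{\chi }$ is an isometry, $\Psi _{\lambda }\neq 0$ whenever $\psi _{\lambda }\neq 0$, so we obtain a genuine eigenfunction with the same eigenvalue.

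For (iii), I would run the argument in the opposite direction: applying $T_{\chi }^{\ast }$ to $\hat{A}^{W}\Psi _{\lambda }=\lambda \Psi _{\lambda }$ and using the second intertwining relation, the equation becomes $\hat{a}^{W}T_{\chi }^{\ast }[\Psi _{\lambda }]=\lambda T_{\chi }^{\ast }[\Psi _{\lambda }]$, i.e.\ $\hat{a}^{W}\psi _{\lambda }=\lambda \psi _{\lambda }$. The nonvanishing hypothesis $\psi _{\lambda }=T_{\chi }^{\ast }[\Psi _{\lambda }]\neq 0$ is exactly what is required to conclude that $\psi _{\lambda }$ is a true eigenfunction.

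For (i), the inclusion of the point spectrum of $\hat{a}^{W}$ in that of $\hat{A}^{W}$ is immediate from (ii). The reverse inclusion is the only nontrivial point, and I expect it to be the main obstacle: part (iii) is not directly applicable because, a priori, $T_{\chi }^{\ast }[\Psi _{\lambda }]$ might vanish for the fixed $\chi $ at hand. I would bypass this by returning to the kernel factorisation $K_A=K_a\otimes \delta $ from the previous theorem, which shows that $\hat{A}^{W}\Psi _{\lambda }=\lambda \Psi _{\lambda }$ is equivalent to $\hat{a}^{W}\Psi _{\lambda }(\cdot ,p)=\lambda \Psi _{\lambda }(\cdot ,p)$ for almost every $p$; since $\Psi _{\lambda }\neq 0$, the slice $\Psi _{\lambda }(\cdot ,p)$ is a nonzero eigenfunction of $\hat{a}^{W}$ for $p$ in a set of positive measure, and $\lambda $ therefore belongs to the point spectrum of $\hat{a}^{W}$. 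Equivalently, one may then pick $\chi $ supported in such a set so that $T_{\chi }^{\ast }[\Psi _{\lambda }]\neq 0$ and invoke (iii) directly.
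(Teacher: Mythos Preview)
Your argument is correct and matches the paper's proof essentially line for line for parts (ii) and (iii): the paper also just applies $T_{\chi}$ (resp.\ $T_{\chi}^{\ast}$) to the eigenvalue equation and invokes the intertwining relations $\hat{A}^{W}T_{\chi}=T_{\chi}\hat{a}^{W}$ and $T_{\chi}^{\ast}\hat{A}^{W}=\hat{a}^{W}T_{\chi}^{\ast}$.

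For (i), the paper handles the reverse inclusion more tersely than you do: it simply remarks that for every nonzero $\Psi_{\lambda}\in\mathcal{S}(\mathbb{R}^{2n})$ there exists some $\chi\in\mathcal{S}(\mathbb{R}^{n})$ with $T_{\chi}^{\ast}[\Psi_{\lambda}]\neq 0$, and then invokes (iii). Your route via the kernel factorisation $K_{A}=K_{a}\otimes\delta$ and the slice argument is a legitimate and slightly more explicit way of \emph{justifying} exactly that remark (and in fact gives a direct proof that bypasses (iii) altogether). So the two approaches coincide in spirit; yours supplies the missing one-line justification that the paper leaves to the reader.
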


\begin{proof}
Let $\chi \in \mathcal{S}(\mathbb{R}^{n})$ and let $\psi _{\lambda }$ be an
eigenfunction of $\hat{a}^{W}$
\begin{equation*}
\hat{a}^{W}\psi _{\lambda }=\lambda \psi _{\lambda }.
\end{equation*}%
Then $\Psi _{\lambda }=T_{\chi }[\psi _{\lambda }]$ is an eigenfunction of $%
\hat{A}_{\chi }^{W}$ (associated with the same eigenvalue)
\begin{equation*}
\hat{A}^{W}\Psi _{\lambda }=\hat{A}^{W}T_{\chi }[\psi _{\lambda }]=T_{\chi }%
\hat{a}^{W}\psi _{\lambda }=\lambda T_{\chi }[\psi _{\lambda }]=\lambda \Psi
_{\lambda }
\end{equation*}%
where we used eq.(\ref{IVIn1}). This proves (ii).

Conversely, assume that
\begin{equation*}
\hat{A}^{W}\Psi _{\lambda }=\lambda \Psi _{\lambda }
\end{equation*}%
and let $\chi \in \mathcal{S}(\mathbb{R}^{n})$ be such that $\psi _{\lambda
}=T_{\chi }^{\ast }[\Psi _{\lambda }]\not=0$. Then
\begin{equation*}
\hat{a}^{W}\psi _{\lambda }=\hat{a}^{W}T_{\chi }^{\ast }\Psi _{\lambda
}=T_{\chi }^{\ast }\hat{A}^{W}\Psi _{\lambda }=\lambda T_{\chi }^{\ast }\Psi
_{\lambda }=\lambda \psi _{\lambda }
\end{equation*}%
where we used eq.(\ref{IVIn2}). Hence $\psi _{\lambda }$ is also an
eigenfunction of $\hat{a}^{W}$ (associated with the same eigenvalue) which
proves (iii).

To conclude the proof of (i) we just notice that for every $\Psi_\lambda \in
\mathcal{S}(\mathbb{R}^{2n})-\{0\}$ there is always some $\chi \in \mathcal{S%
}(\mathbb{R}^n)$ such that $T_\chi^*[\Psi_\lambda] \not=0$.
\end{proof}

\section{Moyal representation}

In this section we construct another phase space representation of quantum
mechanics, which is intimately connected with the deformation quantization
of Bayen et al \cite{BFFLS1,BFFLS2}. Namely, the eigenvalue equation (for a
generic Weyl operator in this representation) is just the Moyal $\star $%
-genvalue equation (for the Weyl symbol of that operator) and its solutions
are thus the $\star $-genfunctions (Theorem 5.8 and Corollary 5.9).
Moreover, the Schr\"{o}dinger equation (in this representation) can be
written in terms of the Moyal starproduct and the Weyl symbol of the
Hamiltonian operator (Corollary 5.10). For these reasons we shall call it
the \textit{Moyal representation}. This formulation of quantum mechanics has
been studied before (also for the more general case where the canonical
structure is given by the extended Heisenberg algebra \cite{DGLP2}) using a
set of partial isometries $L^{2}(\mathbb{R}^{n})\rightarrow L^{2}(\mathbb{R}%
^{2n})$ (the windowed wavepacket transform, familiar from time-frequency
analysis \cite{gr00}) mapping the standard Schr\"{o}dinger configuration
space representation into the Moyal phase space representation \cite%
{CPDE,birkbis,GOLU1,GOLU2}. Here we will follow a different approach by
showing that the Schr\"{o}dinger phase space representation and the Moyal
representation are related by the unitary (in fact metaplectic)
transformation $U$ associated with the symplectic transformation eq.(\ref{I3}%
). This result allows us to translate the Weyl pseudo-differential calculus
and the spectral and dynamical results of the Schr\"{o}dinger phase space
representation directly into the Moyal representation.

In the next subsection we will determine the unitary transformation $U$
explicitly and show that its action on $\Psi \in \mathcal{S}_{\hat\chi}$ is
nothing else but the cross Wigner function associated with the density
matrix element $|\psi\rangle \langle \chi|$. In subsection 5.2 we use the
transformation $U$ to determine the Weyl pseudo-differential calculus in the
Moyal representation. Finally, in subsection 5.3 we construct the eigenvalue
and dynamical equations in this representation and study their relation with
the deformation quantization formulation.

\subsection{Unitary transformation}

In this section we determine the one-parameter quantum evolution groups
generated by the operators
\begin{eqnarray}
\hat H_D & = & \frac{1}{2} (\hat X \cdot \hat\Xi_x+\hat\Xi_x \cdot \hat X) +
\frac{1}{2} (\hat P\cdot \hat\Xi_p+\hat\Xi_p\cdot \hat P)  \notag \\
\hat H_R &=& -\hat\Xi_x\cdot\hat\Xi_p-\hat X\cdot \hat P  \label{VH}
\end{eqnarray}
which are obtained after quantizing the Hamiltonians (\ref{IHD},\ref{IHR})
and use then to determine the explicit form of the unitary operator $U$.
Note that the operators $\hat X,\hat P,\hat\Xi_x,\hat\Xi_p$ are given in the
phase space Schr\"odinger representation
\begin{equation}
\hat X=x=(x_1,...,x_n) \quad , \quad \hat P=p=(p_1,...,p_n)  \label{VSR}
\end{equation}
\begin{equation*}
\hat\Xi_x=-i\partial_x=(-i\partial_{x1},...,-i\partial_{xn}), \quad
\hat\Xi_p=-i\partial_p=(-i\partial_{p1},...,-i\partial_{pn})
\end{equation*}

\begin{remark}
Both $\hat{H}_{D}$ and $\hat{H}_{R}$ are self-adjoint on their maximal
domains
\begin{equation*}
D_{max}(\hat{H}_{D})=\{\Psi \in L^{2}(\mathbb{R}^{2n}):\hat{H}_{D}\Psi \in
L^{2}(\mathbb{R}^{2n})\}
\end{equation*}%
\begin{equation*}
D_{max}(\hat{H}_{R})=\{\Psi \in L^{2}(\mathbb{R}^{2n}):H_{R}\Psi \in L^{2}(%
\mathbb{R}^{2n})\}.
\end{equation*}%
This immediately follows from the fact that they are polynomial differential
expressions which are formally self-adjoint \cite{Gitman}
\end{remark}

We then have

\begin{theorem}
The one-parameter quantum evolution group generated by $\hat H_D$ is given
by ($s \in \mathbb{R}$)
\begin{equation}
U_D(s):L^2(\mathbb{R}^{2n}) \to L^2(\mathbb{R}^{2n}); \quad U_D(s) \Psi(x,p)
= e^{-ns} \Psi(e^{-s}x,e^{-s}p)  \label{VUD}
\end{equation}
\end{theorem}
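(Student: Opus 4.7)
The strategy is to verify directly that the right-hand side of (\ref{VUD}) defines a strongly continuous one-parameter unitary group on $L^2(\mathbb{R}^{2n})$ whose infinitesimal generator coincides with $-i\hat{H}_D$ on a suitable dense core, and then invoke Stone's theorem in conjunction with the self-adjointness of $\hat{H}_D$ recorded in Remark 5.1.

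First I would check the basic structural properties of the candidate $U_D(s)$. The map is a prefactored dilation in the full phase space variable $(x,p)\in\mathbb{R}^{2n}$: the change of variables $(y,q)=(e^{-s}x,e^{-s}p)$ has Jacobian $e^{2ns}$, which is exactly compensated by the prefactor $e^{-2ns}=|e^{-ns}|^{2}$, so $|||U_D(s)\Psi|||=|||\Psi|||$ for every $\Psi\in L^2(\mathbb{R}^{2n})$. The identities $U_D(s)U_D(t)=U_D(s+t)$ and $U_D(0)=I$ follow by inspection, and strong continuity at $s=0$ is immediate on the dense subspace $C_c(\mathbb{R}^{2n})$ (continuous dilations of compactly supported continuous functions converge in $L^2$ by dominated convergence) and extends to all of $L^2(\mathbb{R}^{2n})$ by uniform boundedness.

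Next I would compute the infinitesimal generator on the invariant dense subspace $\mathcal{S}(\mathbb{R}^{2n})$. Differentiating the explicit formula at $s=0$ yields
\begin{equation*}
\left.\frac{d}{ds}U_D(s)\Psi\right|_{s=0}=-n\Psi-(x\cdot\nabla_x+p\cdot\nabla_p)\Psi.
\end{equation*}
On the other hand, the product rule in the phase space Schr\"odinger representation (\ref{VSR}) gives $\hat{\Xi}_{x_j}\hat{X}_j=-i-i x_j\partial_{x_j}$, so the symmetric (Weyl) combination evaluates to $\tfrac{1}{2}(\hat{X}\cdot\hat{\Xi}_x+\hat{\Xi}_x\cdot\hat{X})=-ix\cdot\nabla_x-in/2$, and likewise for the $(p,\hat{\Xi}_p)$ sector. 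Summing, $\hat{H}_D=-i(x\cdot\nabla_x+p\cdot\nabla_p)-in$, hence $-i\hat{H}_D\Psi$ agrees exactly with the $s$-derivative computed above.

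Finally I would conclude with Stone's theorem. Since $\mathcal{S}(\mathbb{R}^{2n})$ is dense in $L^2(\mathbb{R}^{2n})$, is invariant under $U_D(s)$ (dilations preserve Schwartz class), and consists of $C^\infty$ vectors of the group generator on which that generator agrees with the symmetric operator $\hat{H}_D$, the generator of $U_D(s)$ is a self-adjoint extension of $\hat{H}_D\big|_{\mathcal{S}(\mathbb{R}^{2n})}$. Combined with the self-adjointness of $\hat{H}_D$ on its maximal domain (Remark 5.1), the two operators must coincide, so $U_D(s)=e^{-is\hat{H}_D}$. The only delicate point — and the main (though still routine) obstacle — is ensuring that $\mathcal{S}(\mathbb{R}^{2n})$ is indeed a core for $\hat{H}_D$; this is precisely the content of Nelson's invariant-domain criterion, which applies here because $\mathcal{S}(\mathbb{R}^{2n})$ is dense, $U_D(s)$-invariant, and contained in $D_{\max}(\hat{H}_D)$.
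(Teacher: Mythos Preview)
Your proposal is correct and follows essentially the same route as the paper: both compute that $\hat{H}_D=-i(x\cdot\nabla_x+p\cdot\nabla_p+n)$ in the representation (\ref{VSR}), match this against the explicit dilation formula on the dense invariant subspace $\mathcal{S}(\mathbb{R}^{2n})$, and extend by continuity. The only cosmetic difference is direction: the paper starts from the self-adjoint $\hat{H}_D$, solves the Schr\"odinger initial value problem, and recognizes the explicit formula as its unique solution, whereas you start from the explicit formula, verify it is a $C_0$ unitary group, and identify its Stone generator with $\hat{H}_D$ via the invariant-core argument; the two arguments are dual to one another and equally valid.
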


\begin{proof}
Since $\hat{H}_{D}$ is self adjoint and $\mathcal{S}(\mathbb{R}^{2n})\subset
D(\hat{H}_{D})$
\begin{equation*}
\Psi (x,p,s)=e^{-is\hat{H}_{D}}\Psi _{0}(x,p)
\end{equation*}%
is the unique solution of (cf. [\cite{Oliveira}, chapter 5])
\begin{equation*}
i\frac{\partial \Psi }{\partial s}=\hat{H}_{D}\Psi \quad ,\quad \Psi (\cdot
,0)=\Psi _{0}\in \mathcal{S}(\mathbb{R}^{2n}).
\end{equation*}%
This equation reads
\begin{equation*}
i\frac{\partial \Psi }{\partial s}=-i(x\cdot \partial _{x}+p\cdot \partial
_{p}+n)\Psi
\end{equation*}%
and it is trivial to check that
\begin{equation*}
\Psi (x,p,s)=e^{-ns}\Psi _{0}(e^{-s}x,e^{-s}p)
\end{equation*}%
is its explicit solution. Defining
\begin{equation*}
U_{D}(s):L^{2}(\mathbb{R}^{2n})\rightarrow L^{2}(\mathbb{R}^{2n});\quad
U_{D}(s)\Psi (x,p):=e^{-ns}\Psi (e^{-s}x,e^{-s}p)
\end{equation*}%
one can also easily check that $U_{D}(s)$ is linear and unitary (so bounded
and continuous) operator. Since
\begin{equation*}
\left. U_{D}(s)\right\vert _{\mathcal{S}(\mathbb{R}^{2n})}=\left. e^{-is\hat{%
H}_{D}}\right\vert _{\mathcal{S}(\mathbb{R}^{2n})}
\end{equation*}%
and both operators are continuous (and $\mathcal{S}(\mathbb{R}^{2n})$ is
dense in $L^{2}(\mathbb{R}^{2n})$),
\begin{equation*}
U_{D}(s)=e^{-is\hat{H}_{D}}.
\end{equation*}
\end{proof}

\begin{theorem}
The one-parameter unitary evolution group generated by $\hat H_R$ is
\begin{equation*}
U_R(\theta): L^2(\mathbb{R}^{2n}) \longrightarrow L^2(\mathbb{R}^{2n})
\end{equation*}
\begin{equation}
U_R(\theta) \Psi(x,p) :=\mathcal{F}_p^{-1} \left[\hat\Psi(x\cos
\theta+\xi_p\sin \theta, \xi_p \cos \theta -x\sin\theta) \right]  \label{VUR}
\end{equation}
where
\begin{equation*}
\hat \Psi(x,\xi_p)=\mathcal{F}_p[\Psi(x,p)](x,\xi_p)= \frac{1}{(2\pi)^{n/2}}%
\int_{\mathbb{R}^n} \, e^{-i\xi_p\cdot p}\Psi(x,p) \, dp
\end{equation*}
and
\begin{equation*}
\mathcal{F}_p^{-1}[\Psi(x,\xi_p)](x,p)= \frac{1}{(2\pi)^{n/2}}\int_{\mathbb{R%
}^n} \, e^{i\xi_p\cdot p}\Psi(x,\xi_p) \, d\xi_p
\end{equation*}
are the partial and inverse partial Fourier transforms, defined as unitary
operators on $L^2(\mathbb{R}^{2n})$.
\end{theorem}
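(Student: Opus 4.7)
The plan is to mimic the proof of Theorem 5.2: first verify directly that the operator defined by the formula (\ref{VUR}) is unitary, then show that on $\mathcal{S}(\mathbb{R}^{2n})$ it solves the Schr\"odinger evolution equation $i\partial_{\theta}\Psi = \hat H_R \Psi$, and finally extend the identity $U_R(\theta) = e^{-i\theta \hat H_R}$ from $\mathcal{S}(\mathbb{R}^{2n})$ to all of $L^2(\mathbb{R}^{2n})$ by density and continuity, using the self-adjointness of $\hat H_R$ asserted in Remark 5.1.

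The key technical observation is to conjugate $\hat H_R$ by the partial Fourier transform $\mathcal{F}_p$. In the phase space Schr\"odinger representation (\ref{VSR}), and using $(-i)(-i) = -1$, we have $\hat H_R = \partial_x\cdot\partial_p - x\cdot p$. Since the partial Fourier transform maps multiplication by $p$ to $i\partial_{\xi_p}$ and $\partial_p$ to multiplication by $i\xi_p$ (and commutes with $x$ and $\partial_x$), a short calculation gives, on $\mathcal{S}(\mathbb{R}^{2n})$,
\begin{equation*}
\mathcal{F}_p\,\hat H_R\,\mathcal{F}_p^{-1} = i\bigl(\xi_p\cdot\partial_x - x\cdot\partial_{\xi_p}\bigr),
\end{equation*}
which is (up to the factor $i$) the infinitesimal generator of simultaneous rotation by $\theta$ in each of the planes $(x_j,\xi_{p,j})$. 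Hence, writing $\hat\Psi(\cdot,\cdot,\theta) := \mathcal{F}_p[\Psi(\cdot,\cdot,\theta)]$, the equation $i\partial_\theta\Psi = \hat H_R\Psi$ becomes the transport equation
\begin{equation*}
\partial_\theta \hat\Psi = \bigl(\xi_p\cdot\partial_x - x\cdot\partial_{\xi_p}\bigr)\hat\Psi,
\end{equation*}
whose unique solution with initial datum $\hat\Psi_0$ is, by the method of characteristics,
\begin{equation*}
\hat\Psi(x,\xi_p,\theta) = \hat\Psi_0\bigl(x\cos\theta + \xi_p\sin\theta,\ \xi_p\cos\theta - x\sin\theta\bigr).
\end{equation*}
Applying $\mathcal{F}_p^{-1}$ yields exactly the formula (\ref{VUR}).

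Unitarity of $U_R(\theta)$ on $L^2(\mathbb{R}^{2n})$ is then immediate: $\mathcal{F}_p$ and $\mathcal{F}_p^{-1}$ are unitary by Plancherel, and the middle step is composition with a linear, measure-preserving rotation in $(x,\xi_p)$. To identify $U_R(\theta)$ with $e^{-i\theta\hat H_R}$ on the whole Hilbert space, we argue as in the proof of Theorem 5.2: self-adjointness of $\hat H_R$ (Remark 5.1) guarantees that $e^{-i\theta\hat H_R}\Psi_0$ is the unique $L^2$ solution of the Cauchy problem, so for $\Psi_0 \in \mathcal{S}(\mathbb{R}^{2n})$ it agrees with $U_R(\theta)\Psi_0$; the two bounded operators therefore coincide on a dense subspace, hence on all of $L^2(\mathbb{R}^{2n})$.

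The only real obstacle is bookkeeping: keeping the signs straight in the Fourier conjugation of $\hat H_R$ and in the direction of the characteristic flow, so that the rotation that emerges matches the one in the statement. No new functional analysis is required beyond what is used for Theorem 5.2.
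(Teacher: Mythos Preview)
Your proposal is correct and follows essentially the same route as the paper: conjugate $\hat H_R$ by the partial Fourier transform $\mathcal{F}_p$ to reduce the Schr\"odinger equation to a rotational transport equation in $(x,\xi_p)$, solve it by characteristics, and then extend from $\mathcal{S}(\mathbb{R}^{2n})$ to $L^2(\mathbb{R}^{2n})$ by density and continuity using the self-adjointness of $\hat H_R$. The only difference is cosmetic ordering (you check unitarity first, the paper checks it after solving the equation); the computations and the logic are the same.
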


\begin{proof}
Since $\hat{H}_{R}$ is self-adjoint and $\mathcal{S}(\mathbb{R}^{2n})\subset
D(\hat{H}_{R})$, the unique solution of the initial value problem
\begin{equation}
i\frac{\partial \Psi }{\partial \theta }=\hat{H}_{R}\Psi \quad ,\quad \Psi
(\cdot ,0)=\Psi _{0}\in \mathcal{S}(\mathbb{R}^{2n})  \label{VHR1}
\end{equation}%
is given by
\begin{equation}
\Psi (x,p,\theta )=e^{-i\theta \hat{H}_{R}}\Psi _{0}(x,p)  \label{VHR2}
\end{equation}%
Since
\begin{equation*}
\hat{H}_{R}=\partial _{x}\cdot \partial _{p}-x\cdot p=\mathcal{F}_{p}^{-1}%
\left[ i\xi _{p}\cdot \partial _{x}-ix\cdot \partial _{\xi _{p}}\right]
\mathcal{F}_{p}
\end{equation*}
we conclude, defining $\hat{\Psi}(x,\xi _{p},\theta )=\mathcal{F}_{p}[\Psi
(x,p,\theta )]$, that $\hat{\Psi}$ satisfies the initial value problem
\begin{equation*}
i\frac{\partial \hat{\Psi}}{\partial \theta }=\left[ i\xi _{p}\cdot \partial
_{x}-ix\cdot \partial _{\xi _{p}}\right] \hat{\Psi},\quad \hat{\Psi}(\cdot
,0)=\hat{\Psi}_{0}\in \mathcal{S}(\mathbb{R}^{2n})
\end{equation*}%
whose solution
\begin{equation*}
\hat{\Psi}(x,\xi _{p},\theta )=\hat{\Psi}_{0}(x(\theta ),\xi _{p}(\theta ))
\end{equation*}%
\begin{equation*}
x(\theta )=x\cos \theta +\xi _{p}\sin \theta \quad ,\quad \xi _{p}(\theta
)=-x\sin \theta +\xi _{p}\cos \theta
\end{equation*}
always is in $\mathcal{S}(\mathbb{R}^{2n})$.

Hence, the unique solution of (\ref{VHR1}) is
\begin{equation*}
\Psi (x,p,\theta )=\mathcal{F}_{p}^{-1}\left[ \hat{\Psi}(x,\xi _{p},\theta )%
\right] =\mathcal{F}_{p}^{-1}\left[ \hat{\Psi}_{0}(x\cos \theta +\xi
_{p}\sin \theta ,-x\sin \theta +\xi _{p}\cos \theta )\right]
\end{equation*}%
Defining
\begin{equation*}
U_{R}(\theta ):L^{2}(\mathbb{R}^{2n})\rightarrow L^{2}(\mathbb{R}^{2n})
\end{equation*}%
\begin{equation*}
U_{R}(\theta )\Psi (x,p):=\mathcal{F}_{p}^{-1}\left[ \hat{\Psi}(x\cos \theta
+\xi _{p}\sin \theta ,-x\sin \theta +\xi _{p}\cos \theta )\right]
\end{equation*}%
it is trivial to check that $U_{R}(\theta )$ is linear, unitary (so bounded
and continuous) operator. Since by (\ref{VHR2})
\begin{equation*}
\left. U_{R}(\theta )\right\vert _{\mathcal{S}(\mathbb{R}^{2n})}=\left.
e^{-i\theta \hat{H}_{R}}\right\vert _{\mathcal{S}(\mathbb{R}^{2n})}
\end{equation*}%
and both operators are continuous (and $\mathcal{S}(\mathbb{R}^{2n})$ is
dense in $L^{2}(\mathbb{R}^{2n})$),
\begin{equation*}
U_{R}(\theta )=e^{-i\theta \hat{H}_{R}}.
\end{equation*}
\end{proof}

We are now in position to define the unitary operator that corresponds to
the symplectic transformation eq.(\ref{I3})

\begin{corollary}
Consider the unitary operator
\begin{equation*}
U: L^2(\mathbb{R}^{2n}) \longrightarrow L^2(\mathbb{R}^{2n})
\end{equation*}
defined by
\begin{equation}
U:=U_D^{-1}(\ln \sqrt{2})U_R^{-1}(\pi /4)  \label{VU-def}
\end{equation}
Then $U$ acts as
\begin{equation}
U \Psi(x,p)=\mathcal{F}^{-1}_p \left[\hat\Psi(x-\xi_p/2,x+\xi_p/2)\right]%
(x,p)  \label{VU}
\end{equation}
and for
\begin{equation*}
\Psi(x,p)=T_{\hat \chi}[\psi](x,p)=\psi(x) \hat\chi^*(p)
\end{equation*}
where $\hat\chi(p)=\mathcal{F}_{\xi_p}[\chi(\xi_p)](p)$ and $\hat\chi^*(p)=%
\overline{\hat\chi(p)}$, we have
\begin{equation}
U\Psi(x,p)=\frac{1}{(2 \pi)^{n/2}} \int \, e^{ip\cdot\xi_p} \psi(x-\frac{%
\xi_p}{2}) \chi^*(x+\frac{\xi_p}{2})\, d\xi_p  \label{V-WF}
\end{equation}
which is (up to a factor of $(2\pi)^{n/2}$) the cross Wigner function
associated with the density matrix element $|\psi\rangle \langle \chi|$,
i.e. $UT_{\hat\chi}[\psi]=W(\psi,\chi)$.
\end{corollary}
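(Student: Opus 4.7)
The plan is to verify formula (5.10) by composing the two explicit one-parameter-group formulas already established, and then to deduce the Wigner expression (5.11) by specializing the input $\Psi$.

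Since $U_D$ and $U_R$ are unitary one-parameter groups generated by self-adjoint operators, their inverses are just evaluations at the opposite parameter: $U_D^{-1}(\ln\sqrt 2)=U_D(-\ln\sqrt 2)$ and $U_R^{-1}(\pi/4)=U_R(-\pi/4)$. Applying Theorem 5.2 with $s=-\ln\sqrt 2$ gives the dilation inverse $\Phi(x,p)\mapsto 2^{n/2}\Phi(\sqrt 2\,x,\sqrt 2\,p)$. Applying Theorem 5.3 with $\theta=-\pi/4$, and using $\cos(-\pi/4)=1/\sqrt 2$, $\sin(-\pi/4)=-1/\sqrt 2$, the rotation inverse reads
$$(U_R^{-1}(\pi/4)\Psi)(y,q)=\mathcal{F}_q^{-1}\!\Bigl[\hat\Psi\!\Bigl(\tfrac{y-\xi_q}{\sqrt 2},\tfrac{y+\xi_q}{\sqrt 2}\Bigr)\Bigr](y,q).$$

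The next step is to evaluate this at $(y,q)=(\sqrt 2\,x,\sqrt 2\,p)$, as forced by the outer dilation, and then perform the change of variable $\xi_q=\xi_p/\sqrt 2$ inside the inverse partial Fourier integral. Under this substitution the exponential $e^{iq\cdot\xi_q}$ becomes $e^{ip\cdot\xi_p}$, the arguments of $\hat\Psi$ collapse to $x\mp\xi_p/2$, and the Jacobian $d\xi_q=2^{-n/2}\,d\xi_p$ exactly cancels the prefactor $2^{n/2}$ produced by the dilation. What remains is precisely
$$U\Psi(x,p)=\mathcal{F}_p^{-1}\!\bigl[\hat\Psi(x-\xi_p/2,\,x+\xi_p/2)\bigr](x,p),$$
which is (5.10).

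For the second statement I specialize to $\Psi=T_{\hat\chi}[\psi]=\psi\otimes\hat\chi^{*}$. Because $\mathcal{F}_p$ acts only in the second argument, $\hat\Psi(x,\xi_p)=\psi(x)\,\mathcal{F}_p[\hat\chi^{*}](\xi_p)$; since complex conjugation swaps $\mathcal{F}$ with $\mathcal{F}^{-1}$, one has $\hat\chi^{*}=\mathcal{F}_p^{-1}[\chi^{*}]$ and hence $\mathcal{F}_p[\hat\chi^{*}]=\chi^{*}$. Therefore $\hat\Psi(x,\xi_p)=\psi(x)\chi^{*}(\xi_p)$, and inserting this into the formula just proved yields exactly the integral in (5.11). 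Comparison with the convention $W_\phi\psi=(2\pi)^{n/2}W(\psi,\phi)$ recalled in the Introduction identifies $U\,T_{\hat\chi}[\psi]$ with the cross-Wigner function $W(\psi,\chi)$ up to the stated normalization factor.

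The only real difficulty in carrying this out is the bookkeeping: aligning the partial-Fourier convention used in Theorem 5.3, the mutually reciprocal $\sqrt 2$ factors coming from the dilation and the change of variable, and the conjugation identity relating $\hat\chi$ and $\hat\chi^{*}$. No further analytic input is needed, since unitarity and the density of $\mathcal{S}(\mathbb{R}^{2n})$ in $L^2(\mathbb{R}^{2n})$ (already exploited in the proofs of Theorems 5.2 and 5.3) allow the derivation to be carried out on Schwartz functions and then extended by continuity.
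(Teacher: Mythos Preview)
Your proposal is correct and follows essentially the same approach as the paper: both compute $U_R(-\pi/4)$ and $U_D(-\ln\sqrt 2)$ explicitly from Theorems 5.2 and 5.3, compose them, and perform the same $\sqrt 2$-rescaling change of variable in the inverse partial Fourier integral to cancel the dilation prefactor; the specialization to $\Psi=\psi\otimes\hat\chi^{*}$ via the conjugation identity $\mathcal{F}_p[\hat\chi^{*}]=\chi^{*}$ is likewise identical to the paper's argument.
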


\begin{proof}
We have
\begin{equation*}
U_{R}^{-1}(\pi /4)\Psi (x,p)=U_{R}(-\pi /4)\Psi (x,p)=\mathcal{F}_{p}^{-1}%
\left[ \hat{\Psi}(\frac{\sqrt{2}}{2}(x-\xi _{p}),\frac{\sqrt{2}}{2}(x+\xi
_{p}))\right]
\end{equation*}%
\begin{equation*}
=\frac{1}{(2\pi )^{n/2}}\int_{\mathbb{R}^{n}}\,e^{i\xi _{p}\cdot p}\hat{\Psi}%
(\frac{\sqrt{2}}{2}(x-\xi _{p}),\frac{\sqrt{2}}{2}(x+\xi _{p}))\,d\xi _{p}
\end{equation*}%
and since for arbitrary $\Phi (x,p)\in L^{2}(\mathbb{R}^{2n})$
\begin{eqnarray*}
U_{D}^{-1}(\ln \sqrt{2})\Phi (x,p) &=&U_{D}(-\ln \sqrt{2})\Phi (x,p) \\
&=&e^{n\ln \sqrt{2}}\Phi (e^{\ln \sqrt{2}}x,e^{\ln \sqrt{2}}p) \\
&=&2^{n/2}\Phi (\sqrt{2}x,\sqrt{2}p)
\end{eqnarray*}%
we get
\begin{eqnarray*}
U\Psi (x,p) &=&U_{D}^{-1}(\ln \sqrt{2})\frac{1}{(2\pi )^{n/2}}\int_{\mathbb{R%
}^{n}}\,e^{i\xi _{p}\cdot p}\hat{\Psi}(\frac{\sqrt{2}}{2}(x-\xi _{p}),\frac{%
\sqrt{2}}{2}(x+\xi _{p}))\,d\xi _{p} \\
&=&\frac{2^{n/2}}{(2\pi )^{n/2}}\int_{\mathbb{R}^{n}}\,e^{i\sqrt{2}\xi
_{p}\cdot p}\hat{\Psi}(\frac{\sqrt{2}}{2}(\sqrt{2}x-\xi _{p}),\frac{\sqrt{2}%
}{2}(\sqrt{2}x+\xi _{p}))\,d\xi _{p} \\
&=&\frac{1}{(2\pi )^{n/2}}\int_{\mathbb{R}^{n}}\,e^{i\xi _{p}^{\prime }\cdot
p}\hat{\Psi}(x-\frac{\xi _{p}^{\prime }}{2},x+\frac{\xi _{p}^{\prime }}{2}%
)\,d\xi _{p}^{\prime } \\
&=&\mathcal{F}_{p}^{-1}\left[ \hat{\Psi}(x-\xi _{p}/2,x+\xi _{p}/2)\right]
(x,p)
\end{eqnarray*}%
where we made the change of variable $\xi _{p}^{\prime }=\sqrt{2}\xi _{p}$.
This proves eq.(\ref{VU}).

To prove eq.(\ref{V-WF}) consider eq. (\ref{VU}) for $\Psi(x,p)=\psi(x) \hat
\chi^*(p)$. We have
\begin{equation*}
\hat \Psi(x,\xi_p)=\mathcal{F}_p[\psi(x)\hat\chi^*(p)](x,\xi_p)=\psi(x)
\mathcal{F}_p[\hat \chi^*(p)](\xi_p)
\end{equation*}
since
\begin{equation*}
\hat\chi^*(p)=\overline{\mathcal{F}_{\xi_p}[\chi(\xi_p)](p)}=\mathcal{F}%
_p^{-1}[ \chi^*(\xi_p)](p)
\end{equation*}
we get
\begin{equation*}
\hat \Psi(x,\xi_p)=\mathcal{F}_p[\Psi(x,p)]=\psi(x)\chi^*(\xi_p)
\end{equation*}
and the result follows.
\end{proof}

Finally, we consider the action of the unitary transformation on the
fundamental operators

\begin{theorem}
The operator $U$ maps the Schr\"odinger phase space representation (\ref{VSR}%
) into the Moyal representation of the Heisenberg algebra on $L^2(\mathbb{R}%
^{2n})$
\begin{equation}
\begin{array}{ll}
\tilde X & =U\hat XU^{-1}=\hat X-\frac{\hat\Xi_p}{2}=x+i\frac{1}{2}\partial_p
\\
\tilde P & =U\hat P U^{-1}=\hat P-\frac{\hat\Xi_x}{2}=p+i\frac{1}{2}%
\partial_x \\
\tilde \Xi_x & =U\hat \Xi_x U^{-1}=\hat P+\frac{\hat\Xi_x}{2}=p-i\frac{1}{2}%
\partial_x \\
\tilde \Xi_p & =U\hat \Xi_p U^{-1}=\hat X+\frac{\hat\Xi_p}{2}=x-i\frac{1}{2}%
\partial_p%
\end{array}
\label{Vop}
\end{equation}
\end{theorem}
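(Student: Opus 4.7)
The plan is to use the factorization $U=U_D^{-1}(\ln\sqrt{2})U_R^{-1}(\pi/4)$ from Corollary 5.5, compute the conjugation action of each factor on the fundamental operators $\hat X,\hat P,\hat\Xi_x,\hat\Xi_p$ separately, and compose. Since $\hat H_D$ and $\hat H_R$ are self-adjoint (Remark 5.1) and the Schwartz space $\mathcal{S}(\mathbb{R}^{2n})$ is dense in $L^2(\mathbb{R}^{2n})$ and invariant under both $U_D(s)$ and $U_R(\theta)$ (visible from the explicit formulas (\ref{VUD}) and (\ref{VUR})), I would carry out all computations on $\mathcal{S}(\mathbb{R}^{2n})$ and then extend by continuity of $U$ and $U^{-1}$.

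For the rotation factor, set $\hat A(\theta):=U_R^{-1}(\theta)\hat A\,U_R(\theta)$ for each of the four fundamental operators. By Stone's theorem, for $\Psi\in\mathcal{S}(\mathbb{R}^{2n})$ the Heisenberg equation $\frac{d}{d\theta}\hat A(\theta)\Psi=i[\hat H_R,\hat A(\theta)]\Psi$ holds. A short commutator computation using $[\hat X_j,\hat\Xi_{xk}]=i\delta_{jk}$ and $[\hat P_j,\hat\Xi_{pk}]=i\delta_{jk}$ yields
\begin{equation*}
[\hat H_R,\hat X_k]=i\hat\Xi_{pk},\ [\hat H_R,\hat P_k]=i\hat\Xi_{xk},\ [\hat H_R,\hat\Xi_{xk}]=-i\hat P_k,\ [\hat H_R,\hat\Xi_{pk}]=-i\hat X_k.
\end{equation*}
The Heisenberg equations thus decouple into two $2\times 2$ linear systems on the pairs $(\hat X_k,\hat\Xi_{pk})$ and $(\hat P_k,\hat\Xi_{xk})$ whose solutions are plain harmonic rotations. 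Evaluating at $\theta=\pi/4$ gives, for example, $U_R^{-1}(\pi/4)\hat X\,U_R(\pi/4)=(\hat X-\hat\Xi_p)/\sqrt{2}$, and analogous expressions for the other three operators.

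For the dilation factor, the explicit form (\ref{VUD}) yields, after the substitution $(x,p)\mapsto(e^{-s}x,e^{-s}p)$,
\begin{equation*}
U_D^{-1}(s)\hat X\,U_D(s)=e^{s}\hat X,\quad U_D^{-1}(s)\hat P\,U_D(s)=e^{s}\hat P,\quad U_D^{-1}(s)\hat\Xi_x U_D(s)=e^{-s}\hat\Xi_x,\quad U_D^{-1}(s)\hat\Xi_p U_D(s)=e^{-s}\hat\Xi_p,
\end{equation*}
so that with $s=\ln\sqrt{2}$ the factors become $\sqrt{2}$ and $1/\sqrt{2}$. Composing the two steps then gives, for instance,
\begin{equation*}
U\hat X U^{-1}=U_D^{-1}(\ln\sqrt{2})\bigl[(\hat X-\hat\Xi_p)/\sqrt{2}\bigr]U_D(\ln\sqrt{2})=\tfrac{1}{\sqrt{2}}\bigl[\sqrt{2}\,\hat X-\tfrac{1}{\sqrt{2}}\hat\Xi_p\bigr]=\hat X-\tfrac{1}{2}\hat\Xi_p,
\end{equation*}
and the remaining three identities in (\ref{Vop}) follow in exactly the same way.

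The main obstacle is technical rather than conceptual: one must verify that $\mathcal{S}(\mathbb{R}^{2n})$ is a common invariant core on which $\hat H_R$ and each fundamental operator are well defined, and that the strong $\theta$-derivative of $U_R^{-1}(\theta)\hat A U_R(\theta)\Psi$ exists and equals $i[\hat H_R,\hat A(\theta)]\Psi$ for every $\Psi\in\mathcal{S}(\mathbb{R}^{2n})$. Once that is in place, linearity, density of $\mathcal{S}(\mathbb{R}^{2n})$, and unitarity of $U$ promote the identities from $\mathcal{S}(\mathbb{R}^{2n})$ to the maximal domains of the operators $\tilde X,\tilde P,\tilde\Xi_x,\tilde\Xi_p$, completing the proof.
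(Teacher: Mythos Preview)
Your proposal is correct and follows essentially the same route as the paper: both exploit the factorization $U=U_D^{-1}(\ln\sqrt{2})\,U_R^{-1}(\pi/4)$ and compute the conjugation by each factor via the Heisenberg evolution generated by the quadratic Hamiltonians $\hat H_R,\hat H_D$, obtaining exactly the same intermediate expression $e^{i(\ln\sqrt{2})\hat H_D}\tfrac{\sqrt{2}}{2}(\hat X-\hat\Xi_p)e^{-i(\ln\sqrt{2})\hat H_D}=\hat X-\tfrac{1}{2}\hat\Xi_p$. The only cosmetic differences are that the paper invokes the classical symplectic flow of Section~1.1 (using that quadratic Hamiltonians give identical classical and Heisenberg evolutions) rather than writing out the commutators $[\hat H_R,\cdot\,]$ explicitly, and handles $U_D$ the same way rather than via the explicit formula (\ref{VUD}); your remark about the common invariant core $\mathcal{S}(\mathbb{R}^{2n})$ makes rigorous a point the paper leaves implicit.
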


\begin{proof}
We first notice that the operators $\hat{H}_{D}$ and $\hat{H}_{R}$ are
self-adjoint and quadratic. Hence, the solution of the Heisenberg equations
of motion
\begin{equation*}
i\frac{\partial }{\partial t}\hat{Z}=[\hat{Z},\hat{H}]\quad ,\quad \hat{Z}=%
\hat{X},\,\hat{P},\,\hat{\Xi}_{X},\,\hat{\Xi}_{P}
\end{equation*}%
coincides with the classical solution for both $\hat{H}=\hat{H}_{D}$ and $%
\hat{H}=\hat{H}_{R}$. The operators (\ref{Vop}) are the $\hat{H}_{R}$%
-evolution (up to $t=\pi /4$) of the $\hat{H}_{D}$-evolution (up to $t=\ln
\sqrt{2}$) of $\hat{X},\hat{P},\hat{\Xi}_{X}$ and $\hat{\Xi}_{P}$. The
solution (\ref{Vop}) then follows from the equivalent classical solutions
that were obtained in section 1.1. We have, for instance, for $\tilde{X}$
\begin{equation*}
\tilde{X}=U\hat{X}U^{-1}=e^{i(\ln \sqrt{2})\hat{H}_{D}}e^{i\frac{\pi }{4}%
\hat{H}_{R}}\hat{X}e^{-i\frac{\pi }{4}\hat{H}_{R}}e^{-i(\ln \sqrt{2})\hat{H}%
_{D}}
\end{equation*}%
\begin{equation*}
=e^{i(\ln \sqrt{2})\hat{H}_{D}}\frac{\sqrt{2}}{2}\left( \hat{X}-\hat{\Xi}%
_{p}\right) e^{-i(\ln \sqrt{2})\hat{H}_{D}}=\frac{\sqrt{2}}{2}\left( \sqrt{2}%
\hat{X}-\frac{\sqrt{2}}{2}\hat{\Xi}_{p}\right) =\hat{X}-\frac{\hat{\Xi}_{p}}{%
2}.
\end{equation*}%
Finally, since the transformation is unitary, the commutation relations are
preserved and the operators (\ref{Vop}) provide a phase space representation
of the Heisenberg algebra.
\end{proof}

\subsection{Moyal-Weyl pseudo-differential calculus}

A generic operator in the Moyal representation can be obtained from the
corresponding operator in the phase space Schr\"odinger representation by
the action of the unitary transformation $U$. For the operators $\hat A^W$
the action of $U$ yields the \textit{Moyal-Weyl pseudo-differential operators%
}
\begin{equation}
\tilde A^W=U \hat A^W U^{-1} =\left(\frac{1}{2\pi}\right)^n \int_{\mathbb{R}%
^{2n}} F_{\sigma}a(z_0) U\hat T_{PS}(z_0)U^{-1} \, dz_0  \label{VAW1}
\end{equation}
where the domain of $U$ (\ref{VU}) was trivially extended to $\mathcal{S}%
^{\prime }(\mathbb{R}^{2n})$.

We then have

\begin{theorem}
The Moyal-Heisenberg-Weyl operator
\begin{equation*}
\tilde{T}_{M}(z_{0}):=U\hat{T}_{PS}(z_{0})U^{-1}:L^{2}(\mathbb{R}%
^{2n})\longrightarrow L^{2}(\mathbb{R}^{2n})
\end{equation*}%
is given explicitly by ($z_{0}=(x_{0},\xi _{x0})\in \mathbb{R}^{2n}$)
\begin{equation}
\tilde{T}_{M}(x_{0},\xi _{x0})\Psi (x,p)=e^{-i(x_{0}\cdot p-\xi _{x0}\cdot
x)}\Psi (x-\frac{x_{0}}{2},p-\frac{\xi _{x0}}{2}).  \label{VMHW}
\end{equation}
\end{theorem}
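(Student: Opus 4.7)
The plan is to avoid any direct wrestling with the explicit formula for $U$ from Corollary 5.5 and instead exploit the group-theoretic structure. By Remark 4.4 (transcribed to the phase space setting, which is the content of the second half of that remark combined with Definition 4.3),
\[
\hat{T}_{PS}(z_{0})=e^{-i\sigma(\hat{Z}_{x},z_{0})}=e^{-i(x_{0}\cdot\hat{\Xi}_{x}-\xi_{x0}\cdot\hat{X})},
\]
where the generator $\sigma(\hat{Z}_{x},z_{0})=x_{0}\cdot\hat{\Xi}_{x}-\xi_{x0}\cdot\hat{X}$ is self-adjoint on its maximal domain (being a polynomial differential expression, cf.\ Remark 5.1). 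Since $U$ is unitary, the functional calculus gives the standard identity $U e^{-iH}U^{-1}=e^{-iUHU^{-1}}$, and combining this with the conjugation formulas of Theorem 5.6 produces
\[
\tilde{T}_{M}(z_{0})=e^{-i(x_{0}\cdot\tilde{\Xi}_{x}-\xi_{x0}\cdot\tilde{X})}.
\]

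I would then substitute $\tilde{X}=x+\tfrac{i}{2}\partial_{p}$ and $\tilde{\Xi}_{x}=p-\tfrac{i}{2}\partial_{x}$ from Theorem 5.6, obtaining the generator
\[
-i(x_{0}\cdot\tilde{\Xi}_{x}-\xi_{x0}\cdot\tilde{X})=M+D,
\]
with multiplication part $M=-i(x_{0}\cdot p-\xi_{x0}\cdot x)$ and first-order differential part $D=-\tfrac{1}{2}(x_{0}\cdot\partial_{x}+\xi_{x0}\cdot\partial_{p})$. A one-line computation shows that $D$ applied to the coefficient $x_{0}\cdot p-\xi_{x0}\cdot x$ of $M$ equals $\tfrac{1}{2}(-x_{0}\cdot\xi_{x0}+\xi_{x0}\cdot x_{0})=0$, whence $[M,D]=0$ on $\mathcal{S}(\mathbb{R}^{2n})$. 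Consequently $e^{M+D}=e^{M}e^{D}$; the factor $e^{M}$ acts as multiplication by $e^{-i(x_{0}\cdot p-\xi_{x0}\cdot x)}$, and the factor $e^{D}$ is the translation $\Psi(x,p)\mapsto\Psi(x-x_{0}/2,p-\xi_{x0}/2)$ by the vector $(x_{0}/2,\xi_{x0}/2)$. Composing the two recovers formula (VMHW).

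The main technical obstacle is making the exponential manipulations rigorous at the level of $L^{2}(\mathbb{R}^{2n})$ rather than merely formal. Concretely, one must verify that $\sigma(\hat{Z}_{x},z_{0})$ is essentially self-adjoint on $\mathcal{S}(\mathbb{R}^{2n})$ so that the conjugation identity from the functional calculus applies, that $U$ maps this core into a core for $\sigma(\tilde{Z}_{x},z_{0})$ (which it does because $U$ preserves $\mathcal{S}(\mathbb{R}^{2n})$, as the composition of a dilation and a partial Fourier transform), and that the splitting $e^{M+D}=e^{M}e^{D}$ is justified on $L^{2}$. The cleanest route is to show that the explicit right-hand side of (VMHW) already defines a strongly continuous one-parameter unitary group in the variable $z_{0}\to tz_{0}$, to compute its infinitesimal generator and see that it agrees with $\sigma(\tilde{Z}_{x},z_{0})$ on the invariant core $\mathcal{S}(\mathbb{R}^{2n})$, and then to invoke Stone's theorem to conclude equality of the two groups on all of $L^{2}(\mathbb{R}^{2n})$.
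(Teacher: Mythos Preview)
Your proposal is correct and follows essentially the same route as the paper: both arguments conjugate the generator $\sigma(\hat{Z}_{x},z_{0})$ by $U$ using Theorem~5.5, and then verify that the explicit right-hand side of (\ref{VMHW}) (with $z_{0}$ replaced by $tz_{0}$) solves the initial value problem generated by $U\hat{H}_{z_{0}}U^{-1}$ on $\mathcal{S}(\mathbb{R}^{2n})$, concluding by unitarity and density. Your intermediate factorization $e^{M+D}=e^{M}e^{D}$ via $[M,D]=0$ is a pleasant bookkeeping device the paper skips in favour of writing down the solution by inspection, but the ``cleanest route'' you describe in your final paragraph is exactly the paper's argument.
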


\begin{proof}
We recall from Remark 4.4 that $\hat{T}_{PS}(z_{0})$ is the unitary operator
\begin{equation*}
\hat{T}_{PS}(z_{0})=\left. e^{-it\hat{H}_{z_{0}}}\right\vert _{t=1}
\end{equation*}%
associated with the infinitesimal generator
\begin{equation*}
\hat{H}_{z_{0}}=\sigma (\hat{Z}_{x},z_{0})=x_{0}\cdot \hat{\Xi}_{x}-\xi
_{x0}\cdot \hat{X}
\end{equation*}%
Let now
\begin{equation*}
\tilde{T}(z_{0},t)=Ue^{-it\hat{H}_{z_{0}}}U^{-1}
\end{equation*}%
Then $\tilde{T}_{M}(z_{0})=\tilde{T}(z_{0},1)$ and $\Psi (z,t)=\tilde{T}%
(z_{0},t)\Psi _{0}(z)$ is the unique solution of the initial value problem
\begin{equation}
i\frac{\partial }{\partial t}\Psi =U\hat{H}_{z_{0}}U^{-1}\Psi ,\quad \Psi
(\cdot ,0)=\Psi _{0}\in \mathcal{S}(\mathbb{R}^{2n})  \label{VSM}
\end{equation}%
Since (cf. eq.(\ref{Vop}))
\begin{equation*}
U\hat{H}_{z_{0}}U^{-1}=x_{0}\cdot \hat{P}-\xi _{x0}\cdot \hat{X}+\tfrac{1}{2}%
(x_{0}\cdot \hat{\Xi}_{x}+\xi _{x0}\cdot \hat{\Xi}_{p})
\end{equation*}%
it is trivial to check that
\begin{equation*}
\Psi (z,t)=e^{-i(x_{0}\cdot p-\xi _{x0}\cdot x)t}\Psi _{0}(z-\tfrac{1}{2}%
z_{0}t)
\end{equation*}%
is a solution of eq.(\ref{VSM}). It follows that $\tilde{T}(z_{0},t)$ is
unitary and extends trivially to $L^{2}(\mathbb{R}^{2n})$. Hence, $\tilde{T}%
_{M}(z_{0})=\tilde{T}(z_{0},1)$ is, in fact, the unitary transformation
given by eq.(\ref{VMHW}).
\end{proof}

\begin{theorem}
Let $a(z_x) \in \mathcal{S}(\mathbb{R}^{2n})$ be the Weyl symbol of $\hat
a^W $. The operator $\tilde A^W:\mathcal{S}(\mathbb{R}^{2n})\to \mathcal{S}%
^{\prime }(\mathbb{R}^{2n})$ given by (\ref{VAW1}), explicitly
\begin{equation}
\tilde A^W\Psi(z)=\frac{1}{(2\pi)^n} \, \langle F_\sigma a(\cdot),\tilde
T_M(\cdot) \Psi(z)\rangle  \label{VMWO}
\end{equation}
is a Weyl operator with symbol
\begin{equation*}
A_M(x,p,\xi_x,\xi_p) =a(x-\frac{\xi_p}{2},p+\frac{\xi_x}{2})
\end{equation*}
and we shall write $a \overset{\text{Moyal-Weyl}}{\longleftrightarrow}
\tilde A^W$ for the "Moyal-Weyl correspondence" between $a$ and $\tilde A^W$.
\end{theorem}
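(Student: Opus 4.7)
The plan is to invoke the metaplectic covariance of the Weyl calculus (Theorem 4.2), applied in the ``doubled'' $4n$-dimensional phase space $\mathbb{R}^{2n}\oplus\mathbb{R}^{2n}$. The key observation is that $U$ is itself a metaplectic operator in $\mathrm{Mp}(4n,\sigma_x\oplus\sigma_p)$, so conjugation by $U$ merely translates, at the symbol level, into composition with the symplectic transformation $S$ of eq.~(\ref{I3}).

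First, I would identify the symplectic transformation that $U^{-1}$ projects to. The generators $\hat H_D$ and $\hat H_R$ of eq.~(\ref{VH}) are (symmetrizations of) the Weyl quantizations of the quadratic classical Hamiltonians $H_D, H_R$ of eqs.~(\ref{IHD}),(\ref{IHR}); hence the one-parameter unitary groups $U_D(s)$ and $U_R(\theta)$ they generate, whose explicit actions were already obtained in the preceding theorems of this section, lie in $\mathrm{Mp}(4n,\sigma_x\oplus\sigma_p)$ and project respectively onto the classical Hamiltonian flows $S_D(s), S_R(\theta)$. Since $S = S_R(\pi/4)\circ S_D(\ln\sqrt 2)$ and $U^{-1}=U_R(\pi/4)\,U_D(\ln\sqrt 2)$, the operator $U^{-1}$ is a metaplectic operator covering $S$.

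Then I apply Theorem 4.2 with $\hat S := U^{-1}$. By Theorem 4.5 the Weyl symbol of $\hat A^W$ is $A(x,p,\xi_x,\xi_p)=a(x,\xi_x)$, so its conjugate $\tilde A^W = U \hat A^W U^{-1} = \hat S^{-1}\hat A^W\hat S$ has Weyl symbol $A\circ S$. Reading off the four components of $S$ from eq.~(\ref{I3}) yields
\[
A\circ S(x,p,\xi_x,\xi_p) = a\!\left(x-\tfrac{\xi_p}{2},\, p+\tfrac{\xi_x}{2}\right),
\]
which is precisely the claimed $A_M$.

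The main obstacle is the metaplectic identification itself: verifying that $U_D(s)$ and $U_R(\theta)$ really do project onto $S_D(s), S_R(\theta)$ under the covering $\mathrm{Mp}(4n) \to \mathrm{Sp}(4n)$. The cleanest check is to conjugate the canonical operators $\hat X,\hat P,\hat\Xi_x,\hat\Xi_p$ by $U$ and observe that one recovers the linear symplectic action of $S^{-1}$ on the Heisenberg generators---a computation already carried out in the preceding theorem on the fundamental operators in the Moyal representation. As a purely computational backup bypassing covariance entirely, one may insert the explicit expression for $\tilde T_M(z_0)$ from the preceding theorem into (\ref{VMWO}), read off the Schwartz kernel of $\tilde A^W$ as a phase-modulated pullback of $F_\sigma a$, and apply the $4n$-dimensional form of the kernel-to-symbol inversion formula (\ref{IV2}) to recover $A_M$ directly.
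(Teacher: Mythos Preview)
Your proposal is correct and follows essentially the same route as the paper: both argue that $U^{-1}\in\mathrm{Mp}(4n,\sigma_x\oplus\sigma_p)$ (since it is generated by quadratic Hamiltonians) covers the symplectic map $S$ of eq.~(\ref{I3}), then apply the metaplectic covariance Theorem~4.2 together with the identification $A(x,p,\xi_x,\xi_p)=a(x,\xi_x)$ to obtain $A_M=A\circ S$. Your additional remarks on how to verify the covering (via the conjugation of the canonical operators already computed in Theorem~5.5) and the direct kernel-computation backup are sound but go slightly beyond what the paper spells out.
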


\begin{proof}
A simple proof follows from Theorem 4.2 and the fact that $U^{-1}\in
\limfunc{Mp}(4n,\sigma _{x}\oplus \sigma _{p})$ (because it is a unitary
transformation generated by quadratic Hamiltonians) and projects into the
symplectic transformation $S\in Sp(4n;\sigma _{x}\oplus \sigma _{p})$
\begin{equation*}
S(x,p,\xi _{x},\xi _{p})=(x-\frac{\xi _{p}}{2},p-\frac{\xi _{x}}{2},p+\frac{%
\xi _{x}}{2},x+\frac{\xi _{p}}{2})
\end{equation*}%
that was calculated explicitly in section 1.1. Hence, a direct application
of Theorem 4.2 leads to the conclusion that
\begin{equation*}
\tilde{A}^{W}=U\hat{A}^{W}U^{-1}
\end{equation*}%
is the Weyl operator with symbol $A_{M}=A\circ S$ (where $A\overset{\text{%
Weyl}}{\longleftrightarrow }\hat{A}^{W}$), explicitly
\begin{equation*}
A_{M}(x,p,\xi _{x},\xi _{p})=A(x-\frac{\xi _{p}}{2},p-\frac{\xi _{x}}{2},p+%
\frac{\xi _{x}}{2},x+\frac{\xi _{p}}{2})
\end{equation*}%
Since $A$ satisfies (\ref{IVAa}) we also get
\begin{equation*}
A_{M}(x,p,\xi _{x},\xi _{p})=a(x-\frac{\xi _{p}}{2},p+\frac{\xi _{x}}{2})
\end{equation*}%
which concludes the proof.
\end{proof}

\subsection{Deformation Quantization}

In this section we succinctly discuss the relation between the Moyal
representation and the deformation quantization of Bayen \textit{et al.}
\cite{BFFLS1,BFFLS2}. For a complete presentation the reader should refer to
\cite{GOLU1,GOLU2,CPDE}

\begin{theorem}
Let $\tilde A^W:\mathcal{S}(\mathbb{R}^{2n}) \to \mathcal{S}^{\prime }(%
\mathbb{R}^{2n})$ be the Moyal-Weyl operator (\ref{VMWO}) written in terms
of the Weyl symbol $a\in \mathcal{S}^{\prime }(\mathbb{R}^{2n})$ of $\hat
a^W $, i.e. $a \overset{\text{Moyal-Weyl}}{\longleftrightarrow} \tilde A^W$.
Then, for all $\Psi \in \mathcal{S}(\mathbb{R}^{2n})$, we have
\begin{equation*}
\tilde A^W \Psi=a \star \Psi
\end{equation*}
where $*$ is the Moyal starproduct (\ref{IVMSP}).
\end{theorem}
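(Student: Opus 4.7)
The plan is to compute $\tilde{A}^{W}\Psi(z)$ directly from its integral representation (\ref{VMWO}) and to recognise the result as the Moyal starproduct $a\star\Psi$ defined in (\ref{IVMSP}). First, I would substitute the explicit formula (\ref{VMHW}) for $\tilde{T}_{M}(z_{0})$ and observe that the phase $\,x_{0}\cdot p-\xi_{x0}\cdot x\,$ is precisely $\sigma(z,z_{0})$ with $z=(x,p)$ and $z_{0}=(x_{0},\xi_{x0})$. This already puts the operator into the clean form
\begin{equation*}
\tilde{A}^{W}\Psi(z)=\left(\frac{1}{2\pi}\right)^{n}\int_{\mathbb{R}^{2n}}\mathcal{F}_{\sigma}a(z_{0})\,e^{-i\sigma(z,z_{0})}\,\Psi\!\left(z-\tfrac{1}{2}z_{0}\right)dz_{0},
\end{equation*}
which one can already view as an ``integrated Bopp translation'' of $\Psi$ weighted by $\mathcal{F}_{\sigma}a$.

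Next I would expand $\mathcal{F}_{\sigma}a(z_{0})$ via its defining integral (\ref{ISF}) and swap the order of integration (legitimate since $\Psi\in\mathcal{S}(\mathbb{R}^{2n})$ makes the relevant integrand against $\mathcal{F}_{\sigma}a$ a Schwartz function in $z_{0}$ for each fixed $z$, so the pairing may be done by Fubini). After the change of variables $z_{0}=2(z-z'')$ with Jacobian $2^{2n}$, the resulting double integral becomes
\begin{equation*}
\tilde{A}^{W}\Psi(z)=\left(\frac{1}{\pi}\right)^{2n}\iint_{\mathbb{R}^{2n}\times\mathbb{R}^{2n}}a(z')\,\Psi(z'')\,e^{2i\sigma(z'-z,\,z''-z)}\,dz'\,dz''.
\end{equation*}
A final linear change of variables $u=2(z'-z)$, $v=-2(z''-z)$ (each contributing a Jacobian $2^{-2n}$) converts this into the right-hand side of (\ref{IVMSP}) with $b=\Psi$, exhibiting the identity $\tilde{A}^{W}\Psi=a\star\Psi$ in the case $a\in\mathcal{S}(\mathbb{R}^{2n})$. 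For general $a\in\mathcal{S}^{\prime}(\mathbb{R}^{2n})$ the statement then follows by density/continuity, since both sides depend continuously on $a$ in the appropriate topologies.

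The computation is essentially routine bookkeeping, and the main obstacle is not conceptual but clerical: one must keep track of the sign convention $\sigma(z_{x},z_{x}')=\xi_{x}\cdot x'-\xi_{x}'\cdot x$, the powers of $2$ coming from the Bopp factor $z_{0}/2$, and the $(2\pi)^{-n}$ prefactors, so that the final coefficient matches $(4\pi)^{-2n}$ in (\ref{IVMSP}). A useful consistency check is Theorem 5.8: the Weyl symbol $A_{M}(x,p,\xi_{x},\xi_{p})=a(x-\xi_{p}/2,p+\xi_{x}/2)$ obtained there is exactly what the Bopp-shift prescription (\ref{IMR}) produces for left star-multiplication by $a$, so the two descriptions of $\tilde{A}^{W}$ (through its Weyl symbol on $\mathbb{R}^{4n}$ and through $\star$-action on $\mathbb{R}^{2n}$) agree in a manifestly parallel way.
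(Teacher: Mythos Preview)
Your proposal is correct and follows essentially the same route as the paper's proof: start from the integral representation of $\tilde{A}^{W}\Psi$, insert the explicit form of $\tilde{T}_{M}(z_{0})$, expand $\mathcal{F}_{\sigma}a$, and perform a linear change of variables to land on the integral form (\ref{IVMSP}) of the Moyal product. The only cosmetic difference is that the paper makes a single substitution $z_{0}=v$, $z_{1}=z+\tfrac{1}{2}u$ and uses the identity $\sigma(z_{0},z+\tfrac{1}{2}u)+\sigma(z,z_{0})=\tfrac{1}{2}\sigma(z_{0},u)$ to reach (\ref{IVMSP}) directly, whereas you pass through the intermediate kernel form $\pi^{-2n}\iint a(z')\Psi(z'')e^{\pm 2i\sigma(z'-z,\,z''-z)}dz'\,dz''$ before a second substitution; both are equivalent rearrangements of the same computation.
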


\begin{proof}
For completeness we review the proof of \cite{GOLU1}.

For $\Psi \in \mathcal{S}(\mathbb{R}^{2n})$ we have
\begin{equation*}
\tilde A^W \Psi (z)=\left(\frac{1}{2\pi} \right)^n \int_{\mathbb{R}^{2n}}%
\mathcal{F}_{\sigma}a(z_0) \tilde T_M(z_0) \Psi(z)\, dz_0
\end{equation*}
\begin{equation*}
=\left(\frac{1}{2\pi} \right)^{2n} \int_{\mathbb{R}^{2n}} \left[ \int_{%
\mathbb{R}^{2n}}e^{-i \sigma(z_0,z_1)} a(z_1) \, dz_1 \right] e^{-i
\sigma(z,z_0)} \Psi(z-\frac{1}{2}z_0) \, dz_0
\end{equation*}
Letting $z_0=v$ and $z_1=z+\frac{1}{2}u$ and noticing that
\begin{equation*}
\sigma(z_0,z+\frac{1}{2}u)+\sigma(z,z_0)=\frac{1}{2}\sigma(z_0,u)
\end{equation*}
we find
\begin{equation*}
\tilde A^W \Psi(z)=\left(\frac{1}{4\pi} \right)^{2n} \int_{\mathbb{R}^{2n}}
\int_{\mathbb{R}^{2n}}e^{-\frac{i}{2} \sigma(v,u)}a(z+\frac{1}{2}u)\Psi( z-%
\frac{1}{2}v)\, du dv
\end{equation*}
which is exactly $a \star \Psi$ cf.(\ref{IVMSP}).
\end{proof}

Two immediate corollaries are

\begin{corollary}
$\Psi _{\lambda }$ is the right-stargenfunction of $a$, i.e.
\begin{equation*}
a\star \Psi _{\lambda }=\lambda \Psi
\end{equation*}%
iff $\Psi _{\lambda }$ is an eigenfunction of $\tilde{A}^{W}\overset{\text{%
Moyal-Weyl}}{\longleftrightarrow }a$, i.e.
\begin{equation*}
\tilde{A}^{W}\Psi _{\lambda }=\lambda \Psi _{\lambda }.
\end{equation*}
\end{corollary}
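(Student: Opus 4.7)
The plan is to deduce this corollary as an immediate consequence of Theorem 5.8, which already identifies the action of $\tilde{A}^W$ on test functions with left-multiplication by $a$ via the Moyal starproduct. Since both sides of the stargenvalue equation and the eigenvalue equation involve the same expression $a \star \Psi_\lambda = \tilde{A}^W \Psi_\lambda$, the equivalence is essentially a substitution.

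Concretely, the proof would proceed as follows. Assume first that $\Psi_\lambda \in \mathcal{S}(\mathbb{R}^{2n})$ satisfies $a \star \Psi_\lambda = \lambda \Psi_\lambda$. Applying Theorem 5.8, the left-hand side equals $\tilde{A}^W \Psi_\lambda$, so $\tilde{A}^W \Psi_\lambda = \lambda \Psi_\lambda$, establishing one implication. Conversely, if $\tilde{A}^W \Psi_\lambda = \lambda \Psi_\lambda$, then applying Theorem 5.8 in the other direction rewrites the left-hand side as $a \star \Psi_\lambda$, giving $a \star \Psi_\lambda = \lambda \Psi_\lambda$.

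The only subtlety I anticipate — and likely the single nontrivial point to address — is the domain of validity. Theorem 5.8 is stated for $\Psi \in \mathcal{S}(\mathbb{R}^{2n})$, whereas the corollary's eigenvalue equation is naturally posed on the full Hilbert space $L^2(\mathbb{R}^{2n})$ (or on $\mathcal{S}'(\mathbb{R}^{2n})$ if one allows generalized eigenfunctions, as is typical in deformation quantization). I would either restrict the statement to $\Psi_\lambda \in \mathcal{S}(\mathbb{R}^{2n})$ (which is the cleanest option), or argue that both $\tilde{A}^W$ and the left starproduct operator $a \star (\cdot)$ extend by duality/continuity to the same operator on $\mathcal{S}'(\mathbb{R}^{2n})$, so that the identity $\tilde{A}^W \Psi = a \star \Psi$ persists on the extended domain. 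Once this domain issue is settled, no further work is required: the corollary is literally a restatement of Theorem 5.8 applied to the particular function $\Psi_\lambda$.

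Thus I expect the bulk of the proof to be a single short paragraph invoking Theorem 5.8 in both directions, with at most a one-line remark clarifying the functional setting in which the stargenvalue equation is to be understood.
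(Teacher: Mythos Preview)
Your proposal is correct and matches the paper's approach exactly: the paper states this result as one of ``two immediate corollaries'' of Theorem 5.8 and gives no separate proof. Your discussion of the domain subtlety is more careful than the paper itself, which leaves the functional setting implicit.
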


and:

\begin{corollary}
The Schr\"odinger equation in the Moyal representation can be written in the
form
\begin{equation*}
i \frac{\partial \Psi}{\partial t}=h\star\Psi
\end{equation*}
where $h \overset{\text{Moyal-Weyl}}{\longleftrightarrow} \tilde H^W$ and $%
\tilde H^W$ is the Hamiltonian of the system in the Moyal representation.
\end{corollary}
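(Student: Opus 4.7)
The plan is to obtain this as an immediate synthesis of Theorem~5.8 (which identifies the action of any Moyal-Weyl operator with left-multiplication by its symbol via the star product) and the unitary equivalence between the phase space Schrödinger representation and the Moyal representation established in the previous subsection. No new computation should be needed; the work lies in stringing the equivalences together in the right order.

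First, I would recall that in the standard configuration space representation the dynamics of a system with Hamiltonian symbol $h \in \mathcal{S}^{\prime}(\mathbb{R}^{2n})$ is governed by $i\partial_t \psi = \hat{h}^W \psi$. Applying Theorem 2.12 (dynamics) to the self-adjoint operator $\hat{h}^W$, I get that $\Psi = T_\chi[\psi]$ satisfies $i\partial_t \Psi = \hat{H}^W \Psi$ in the phase space Schrödinger representation, where $\hat{H}^W$ is the phase space Weyl operator associated with $\hat{h}^W$ as in Section~4.2. Then I would pass to the Moyal representation via the unitary operator $U$ of Corollary~5.5: setting $\Psi_M := U\Psi$, and using that $U$ is time-independent so that $\partial_t$ commutes with $U$, I obtain
\begin{equation*}
i\partial_t \Psi_M = U\hat{H}^W U^{-1}\Psi_M = \tilde{H}^W \Psi_M,
\end{equation*}
which is precisely the Schrödinger equation in the Moyal representation with Hamiltonian $\tilde{H}^W$.

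At this point the conclusion is a direct invocation of Theorem~5.8 applied to the symbol $a=h$: for every $\Psi_M \in \mathcal{S}(\mathbb{R}^{2n})$ one has $\tilde{H}^W \Psi_M = h \star \Psi_M$, and substituting into the previous equation yields
\begin{equation*}
i\frac{\partial \Psi_M}{\partial t} = h \star \Psi_M,
\end{equation*}
exactly as claimed. One should also observe that $h \overset{\text{Moyal-Weyl}}{\longleftrightarrow} \tilde{H}^W$ by construction of the Moyal-Weyl correspondence in Theorem~5.7, so the labeling of $h$ as the Weyl symbol of $\tilde{H}^W$ in the sense stated in the corollary is consistent.

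The one mildly delicate point, rather than a real obstacle, is the domain issue: Theorem~5.8 is proved on $\mathcal{S}(\mathbb{R}^{2n})$, whereas the dynamical statement of Theorem~2.12 asks for self-adjointness of $\hat{h}^W$ on its natural domain. For the corollary as stated it suffices to work with Schwartz initial data (as the authors do throughout Section~5), so the argument above is complete; for more general initial data one would need to extend the identification $\tilde{H}^W \Psi = h\star \Psi$ by density/continuity, which is routine given the unitarity of $U$ and the standard continuity properties of the star product on the appropriate symbol classes.
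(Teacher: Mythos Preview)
Your proposal is correct and matches the paper's intent: the paper states this result as an immediate corollary of Theorem~5.8 with no separate proof, and you have simply spelled out the natural chain (Theorem~2.12 for the dynamics, the unitary $U$ to pass to the Moyal picture, then Theorem~5.8 to rewrite $\tilde H^W\Psi$ as $h\star\Psi$). One minor numbering slip: the unitary $U$ is defined in Corollary~5.4, not~5.5.
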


In view of the spectral relation of $\hat a^W$ and $\hat A^W$ (Corollary
4.5) and the unitary relation between $a\star=\tilde A^W=U \hat A^W U^{-1}$
and $\hat A^W$ (cf. eq.(\ref{VAW1})), we also have

\begin{corollary}
Let $a \in \mathcal{S}^{\prime }(\mathbb{R}^{2n})$ be the Weyl symbol of $%
\hat a^W$. Then

i) The eigenvalues of the stargenvalue equations
\begin{equation}
a\star \Psi _{\lambda }=\lambda \Psi _{\lambda }  \label{VSE}
\end{equation}%
and
\begin{equation}
\hat{a}^{W}\psi _{\lambda }=\lambda \psi _{\lambda }  \label{VEE}
\end{equation}%
are the same.

ii) If $\psi_\lambda$ is an eigenfunction (solution of eq.(\ref{VEE})) then $%
\Psi_\lambda=U T_\chi \psi_\lambda$ is a stargenfunction (solution of (\ref%
{VSE})) associated with the same eigenvalue.

iii) Conversely, if $\Psi_\lambda$ is a stargenfunction and $\psi_\lambda=
T_\chi^* U^{-1} \Psi_\lambda\not=0$ then $\psi_\lambda$ is an eigenfunction
of $\hat a^W$ (associated with the same eigenvalue).
\end{corollary}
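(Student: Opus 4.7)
The statement is essentially a composition of two spectral equivalences already established: the $T_\chi$-based equivalence between $\hat a^W$ and $\hat A^W$ (Corollary 4.5), and the unitary equivalence between $\hat A^W$ and $\tilde A^W = a\star\,\cdot\,$ provided by $U$ (Theorem 5.8 together with eq.(\ref{VAW1})). The plan is to chain these two together through the intermediate operator $\hat A^W$, so that the composite map $UT_\chi : L^2(\mathbb{R}^n)\to L^2(\mathbb{R}^{2n})$ plays, for the stargenvalue equation, the role that $T_\chi$ plays for the phase space Schr\"odinger eigenvalue equation.

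For part (i), I would first invoke Corollary 5.9 to identify the set of eigenvalues of the stargenvalue equation (\ref{VSE}) with the set of eigenvalues of $\tilde A^W$. Then, since $\tilde A^W = U\hat A^W U^{-1}$ with $U$ unitary, the spectrum of $\tilde A^W$ coincides with the spectrum of $\hat A^W$. Finally, Corollary 4.5(i) identifies the spectrum of $\hat A^W$ with that of $\hat a^W$. Composing the three equalities yields (i).

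For part (ii), starting from $\hat a^W\psi_\lambda = \lambda\psi_\lambda$, Corollary 4.5(ii) gives $\hat A^W T_\chi[\psi_\lambda] = \lambda T_\chi[\psi_\lambda]$. Applying $U$ to both sides and using $\tilde A^W U = U\hat A^W$ (i.e. $\tilde A^W = U\hat A^W U^{-1}$) produces $\tilde A^W (UT_\chi[\psi_\lambda]) = \lambda (UT_\chi[\psi_\lambda])$. Theorem 5.8 then rewrites the left-hand side as $a\star(UT_\chi[\psi_\lambda])$, proving that $\Psi_\lambda := UT_\chi[\psi_\lambda]$ is a stargenfunction with eigenvalue $\lambda$. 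Part (iii) runs the same argument in reverse: from $a\star\Psi_\lambda = \lambda\Psi_\lambda$ Theorem 5.8 yields $\tilde A^W\Psi_\lambda = \lambda\Psi_\lambda$; applying $U^{-1}$ gives $\hat A^W(U^{-1}\Psi_\lambda) = \lambda(U^{-1}\Psi_\lambda)$; and then Corollary 4.5(iii), applied to the eigenfunction $U^{-1}\Psi_\lambda$ of $\hat A^W$ under the nontriviality hypothesis $T_\chi^*U^{-1}\Psi_\lambda \neq 0$, delivers $\hat a^W\psi_\lambda = \lambda\psi_\lambda$ with $\psi_\lambda = T_\chi^*U^{-1}\Psi_\lambda$.

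There is no real obstacle here; the only point that needs a brief mention is that the commutation relation $\tilde A^W U = U \hat A^W$ (equivalently $U^{-1}\tilde A^W = \hat A^W U^{-1}$) is the operator identity $\tilde A^W = U\hat A^W U^{-1}$ read as an intertwiner on the relevant Schwartz and distributional domains, which is already how eq.(\ref{VAW1}) is set up. The mildest subtlety worth flagging in (iii) is that, without the hypothesis $T_\chi^*U^{-1}\Psi_\lambda \neq 0$, the conclusion would be vacuous, so its inclusion is essential and mirrors exactly the corresponding caveat in Corollary 4.5(iii).
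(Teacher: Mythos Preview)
Your proof is correct and follows essentially the same approach as the paper: it chains Corollary 4.5 with the unitary equivalence $\tilde A^W = U\hat A^W U^{-1}$ and the identification $\tilde A^W\Psi = a\star\Psi$ from Theorem 5.8/Corollary 5.9. The paper's proof is just a one-line summary of exactly this argument.
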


\begin{proof}
The result follows from Corollary 4.5 by a direct application of the unitary
transformation $U$ (taking into account Corollary 5.10).
\end{proof}

\subsection*{Acknowledgements}

Maurice de Gosson has been financed by the Austrian Research Agency FWF
(Projektnummer P 23902-N13). Nuno Costa Dias and Jo\~{a}o Nuno Prata have
been supported by the grants PDTC/MAT/ 69635/2006 and PTDC/MAT/099880/2008
of the Portuguese Science Foundation (FCT). Franz Luef has been financed by
the Marie Curie Outgoing Fellowship PIOF 220464.

\textbf{Author's addresses:}\bigskip

\textbf{Franz Luef\footnote{%
franz.luef@univie.ac.at}:} \textit{Universit\"{a}t Wien, NuHAG}

\textit{\ Fakult\"{a}t f\"{u}r Mathematik }

\textit{Wien \ 1090, Austria}

%and

%\textit{Department of Mathematics}

%\textit{UC Berkeley}

%\textit{847 Evans Hall}

%\textit{Berkeley}

%\textit{CA 94720-3840, USA}
\bigskip

\textbf{Jo\~{a}o Nuno Prata\footnote{%
joao.prata@mail.telepac.pt } and Nuno Costa Dias\footnote{%
ncdias@meo.pt}:}

\textit{Departamento de Matem\'{a}tica. Universidade Lus\'{o}fona de
Humanidades }

\textit{e Tecnologias. Av. Campo Grande, 376, }

\textit{1749-024 Lisboa, Portugal}

\textit{and}

\textit{Grupo de F\'{\i}sica Matem\'{a}tica, }

\textit{Universidade de Lisboa, }

\textit{Av. Prof. Gama Pinto 2, }

\textit{1649-003 Lisboa, Portugal}\bigskip

\textbf{Maurice de Gosson\footnote{%
maurice.de.gosson@univie.ac.at}:} \textit{Universit\"{a}t Wien, NuHAG}

\textit{\ Fakult\"{a}t f\"{u}r Mathematik }

\textit{Wien \ 1090, Austria}

\end{document}